\documentclass[11pt]{article}
\usepackage{amssymb}
\usepackage{amsthm}
\usepackage{latexsym}
\usepackage{makecell}
\usepackage{tikz}
%%%%%%%%%%%%%%%%%%%%%%%%%%%%%%%%%%%
%\usepackage[ruled,vlined,algosection]{algorithm2e}
\usepackage{amsmath}
\usepackage{mathabx}
\usepackage[all]{xy}
\usepackage{titling}
\thanksmarkseries{arabic}
\usepackage{hyperref}
\usepackage{accents}
\hypersetup{
  colorlinks   = true, %Colours links instead of ugly boxes
  urlcolor     = blue, %Colour for external hyperlinks
  linkcolor    = blue, %Colour of internal links
  citecolor   = red %Colour of citations
}
%%%%%%%%%%%%%%%%%%%%%%%%%%%%%%%%
%\usepackage[notref,notcite]{showkeys}
%\renewcommand{\showkeyslabelformat}[1]{\fbox{\normalfont\tiny\ttfamily#1}}
%%%%%%%%%%%%%%%%%%%%%%%%%%%%%%%%%%%%%%%
\topmargin-1cm
\textheight 220mm \textwidth 160mm
\oddsidemargin-.3cm \evensidemargin-.3cm
\parskip0.7ex
\parindent7mm
%%%%%%%%%%%%%%%%%%%%%%%%%%%%%%%%%%%%%%%%%%%%%%%

%%%%%%%%%%%%%%%%%%%%%%%%%%%%%%%%%%%%%%%%%%%%%%%%
\theoremstyle{definition}
\newtheorem{theo}{Theorem}[section]
\newtheorem{prop}[theo]{Proposition}
\newtheorem{cor}[theo]{Corollary}
\newtheorem{lem}[theo]{Lemma}
\newtheorem{notation}[theo]{Notation}
\newtheorem{defi}[theo]{Definition}

\newtheorem{defitheo}[theo]{Definition/Theorem}
\newtheorem{exa}[theo]{Example}
\newtheorem{rem}[theo]{Remark}

\newtheorem{problem}[theo]{Problem}
\numberwithin{equation}{section}
%\numberwithin{algocf}{section}

%%%%%%%%%%%%%%%%%%%%%%%%%%%%%%%%%%%%%%%%%%%%
%\newcommand{\MyNote}[1]{\mbox{}\\ \mbox{}\hspace*{-1cm}\begin{minipage}[t]{13cm}{\sl #1}\end{minipage}\ \\}
%%%%%%%%%%%%%%%%%%%%%%%%%%%%%%%%%%%%%%%%%%%%

%%%%%%%%%%%%%%%%%%%%%%%%%%%%%%%%%%%%%%%%%%%%
\newcommand*\circled[1]{\tikz[baseline=(char.base)]{
            \node[shape=circle,draw,inner sep=2pt] (char) {#1};}}
\newcommand{\ev}{\mathrm{ev}}
\newcommand{\N}{{\mathbb N}}
\newcommand{\F}{{\mathbb F}}

\newcommand{\Z}{{\mathbb Z}}

\newcommand{\C}{{\mathbb C}}
\newcommand{\R}{{\mathbb R}}

\newcommand{\cC}{{\mathcal C}}

\newcommand{\cG}{{\mathcal G}}

\newcommand{\cM}{{\mathcal M}}
\newcommand{\cP}{{\mathcal P}}
\newcommand{\cO}{{\mathcal O}}
\newcommand{\cQ}{{\mathcal Q}}
\newcommand{\cB}{{\mathcal B}}

\newcommand{\cN}{{\mathcal N}}
\newcommand{\cU}{{\mathcal U}}

\newcommand{\mm}{{\mathfrak m}}
\newcommand{\Aut}{\mbox{\rm Aut}}
\newcommand{\rMon}{\mbox{\rm rMon}}
\newcommand{\Iso}{\mbox{\rm Iso}}
%%%%%%%%%%%%%%%%%%%%%%%%%%%%%%%%%%%%%%%%%%%%%%%%%%%%%%%%
\newcommand{\bidual}[1]{{\widehat{\phantom{\big|}\hspace*{.5em}}\hspace*{-.9em}\widehat{#1}}}

\newcommand{\Char}{\mbox{${\rm char}$}}
\newcommand{\hwt}{\mbox{${\rm wt}_{\rm H}$}}

\newcommand{\ann}{\mbox{\rm ann}}
\newcommand{\soc}{\mbox{\rm soc}}
\newcommand{\rad}{\mbox{\rm rad}}
\newcommand{\im}{\mbox{\rm im}\,}

\newcommand{\Rhat}{\mbox{$\widehat{R}$}}
\newcommand{\T}{\mbox{$\!^{\sf T}$}}
\newcommand{\spann}{\mbox{\rm span}\,}

\newcommand{\swt}{\mbox{${\rm wt}_{\rm s}$}}

\newcommand{\sbt}{\raisebox{.2ex}{\mbox{$\scriptscriptstyle\bullet\,$}}}
\newcommand{\Mon}{\mbox{\rm Mon}}
\newcommand{\group}[1]{\mbox{$\langle{#1}\rangle$}}
\newcommand{\inner}[2]{\mbox{$\langle\,{#1}\,|\,{#2}\,\rangle$}}
\newcommand{\inners}[2]{\mbox{$\langle{\,{#1}\,}|\,{#2}\,\rangle_{\rm s}$}}

\newcommand{\ov}[1]{\mbox{$\overline{#1}$}}
\newcommand{\GL}{\mathrm{GL}}
\newcommand{\SL}{\mathrm{SL}}
\newcommand{\Symp}{\mathrm{Symp}}

\newcommand{\diag}{\textup{diag}}

\newcommand{\dist}{\textup{dist}}
\newcommand{\cBn}{\mbox{$\cB^{\otimes n}$}}

%%%%%%%%%%%%%%%%%%%%%%%%%%%%%%%%%%%%%%%%%%%%%%%%%

%%%%%%%%%%%%%%%%%%%%%%%%%%%%%%%%%%%%%%%%%%%%%%%%%

\newcounter{alp}
\newcounter{ara}
\newcounter{rom}

\newenvironment{arabiclist}{\begin{list}{(\arabic{ara})\hfill}{\usecounter{ara}
     \topsep0.4ex \labelwidth.6cm \leftmargin.6cm \labelsep0cm
     \rightmargin0cm \parsep0ex \itemsep0ex}}{\end{list}}
%%%%%%%%%%%%%%%%%%%%%%%%%%%%%%%%%%%%%%%%%%%%%%%%%%%%%%%%%%%%%%%%%%%

%%%%%%%%%%%%%%%%%%%%%%%%%%%%%%%%%%%%%%%%%%%%%%
\begin{document}
\title{Symplectic Isometries of Stabilizer Codes}
\date{\today}
\author{Tefjol Pllaha\\University of Kentucky\\Department of Mathematics\\715 Paterson Office Tower\\ Lexington KY 40506-0027, USA\\ tefjol.pllaha@uky.edu
}

\maketitle

{\bf Abstract:} In this paper we study the equivalence of quantum stabilizer codes via symplectic isometries of stabilizer codes. We define monomially and symplectically equivalent stabilizer codes and determine how different the two notions can be. Further, we show that to monomial maps correspond local Clifford operators. We relate the latter with the LU-LC Conjecture.

{\textbf{Keywords:}} Quantum stabilizer codes, self-orthogonal codes, Frobenius rings, symplectic isometries, local equivalence, LU-LC Conjecture.

{\textbf{MSC (2010):}} 	11T71, 94B05

%\mbox{}\hfill\today

\section{Introduction}
Quantum stabilizer codes constitute the most important class of quantum error-correcting codes due to their rich structure and strong connections with classical codes. It is well-known \cite{CRSS98} that to a quantum stabilizer code corresponds a self-orthogonal (with respect to a certain symplectic bilinear form) classical code. We refer to the latter as a stabilizer code. Subsequently the study was extended to finite fields \cite{AK01,KKKS06}. As it turned out, the crucial property of finite fields that allowed the generalization was the existence of a generating character. This led Nadella and Klappenecker \cite{NK12} to the study of quantum stabilizer codes over Frobenius rings, where they show the existence along with structural results when restricted to chain rings. While they show that quantum stabilizer codes over Frobenius rings cannot outperform quantum stabilizer codes over fields, they point out the simpler arithmetic of the former. In \cite{me2} the authors generalize the results to the much larger class of local Frobenius rings. 

In this paper we extend the study of \cite[Section 7]{me2} and focus on the equivalence of quantum stabilizer codes. In particular, we discuss the structure of symplectic isometries of stabilizer codes. A symplectic isometry is a linear map that preserves the symplectic bilinear form and the symplectic weight. A particularly nice class of symplectic isometries are the so-called monomial maps. However, as shown in \cite[Ex. 7.3]{me2}, there exist symplectic isometries that are not monomial maps. This led to the problem \cite[Q. 7.4]{me2} of determining how far from being a monomial map a symplectic isometry is. This type of question is well-studied in classical coding theory, and it is commonly referred as MacWilliams Extension Theorem. F. J. MacWilliams showed in her PhD thesis that every Hamming isometry between linear codes over fields is a monomial map. Thus, in that particular case, there is no difference between monomial maps and Hamming isometries. The result has been vastly generalized by considering different weight functions and different alphabets; see \cite{W99, GS00, GNW04, W09,me1} and the references therein. On the other hand, MacWilliams Extension Theorem does not hold for additive codes endowed with the Hamming weight. This led Wood \cite{Wood16} to the study of isometry groups of additive codes.

The symplectic weight in $R^{2n}$ is simply the Hamming weight in $(R^2)^n$ up to a change of coordinates that we call $\gamma$. This elementary observation allows us to make use of the work of Wood \cite{Wood16}. However, to study isometry groups of stabilizer codes we need to take care of self-orthogonality. To this end, we modify the notions of \cite{Wood16} and associate to a stabilizer $\cC$ code two isometry groups: $\Symp(\cC)$ and $\Mon_{\text{SL}}(\cC)$, which satisfy $\Mon_{\text{SL}}(\cC)\subsetneq \Symp(\cC)$. The first task of this paper is to establish how different can these two groups be. We show that the difference can be as big as possible when $R=\F_q$ is a finite field and give partial results when $R$ is local Frobenius ring. Secondly, we consider the equivalence and local equivalence of quantum stabilizer codes. We show that monomial maps correspond to local Clifford operators. 

The paper is organized as follows. In Section \ref{S-2} we provide some background on finite (commutative) Frobenius rings. Section \ref{S-3} draws the connections between quantum stabilizer codes and stabilizer codes in the general setting of Frobenius rings. In Section \ref{Sec-Symp} we study in details symplectic isometries of stabilizer codes. First we view the isometry groups as groups of matrices. This gives a unified approach that takes care of the change of coordinate $\gamma$ between $R^{2n}$ and $(R^2)^n$. Secondly, we show that concatenation preserves the isometry groups. Subsequently, we make use of the latter and \cite[Thm. 5.1]{Wood16} to produce stabilizer codes with predetermined isometry groups. In Section \ref{Sec-LULC} we apply the results of Section \ref{Sec-Symp} to the so-called LU-LC conjecture. As mentioned, monomial maps completely determine local Clifford equivalence. On the other hand it is not clear how general symplectic isometries relate to local unitary equivalence. Understanding the latter yields a systematic way of creating examples that disprove the LU-LC Conjecture. Finally, we end the paper with some conclusions and directions for future research.

%%%%%%%%%%%%%%%%%%%%%%%%%%%%%%%%%%%%%%%%%%%%%%%%%%%%%%%%%%%
\section{Frobenius Rings}\label{S-2}
In this section we collect a few facts about Frobenius rings. Let~$A$ be a finite abelian group.
Its \textbf{character group} is defined as the set $\widehat{A} := \text{Hom}(A, \mathbb{C}^*)$ of all group homomorphisms from~$(A,+)$ to~$\C^*$,
endowed with addition $(\chi_1 + \chi_2)(a) = \chi_1(a)\chi_2(a)$ for all $\chi_i\in\widehat{A}$ and $a\in A$.
Then $\widehat{A}$ is again an abelian group.
Its zero element is $\varepsilon_A \in \widehat{A}$ given by $\varepsilon_A(a) =1$ for all $a\in A$.
Elements of $\widehat{A}$ are called \textbf{characters} and $\varepsilon_A$ is the \textbf{principal character} of $A$. The additive inverse of $\chi \in \widehat{A}$ is given by $(-\chi)(a) := \ov{\chi(a)}$, where $\ov{\sbt}$ denotes the complex conjugate. It is well-known that $A\cong \widehat{A}$ as groups, though the isomorphism is not natural. We have a natural isomorphism of groups
\begin{equation}\label{e-bidualA}
  \zeta_A: A\longmapsto \bidual{A}, \,\,\,\\
   a\longmapsto \left\{\!\!\begin{array}{rcl}\ev_a:\widehat{A}&\longrightarrow&\C^*\\ \chi&\longmapsto&\chi(a)\end{array}\right.,
\end{equation}
and thus we identify $A$ and $\bidual{A}$. The kernel of a character $\chi\in \widehat{A}$ is $\ker \chi:= \{a\in A \mid \chi(a) = 1\}$ and it is well-known that 
\begin{equation}\label{e-ker0}
\bigcap_{\chi \in \widehat{A}} \ker \chi = \{0\}.
\end{equation}
We will focus on the additive group of finite commutative rings. To this end, let $R$ be a finite commutative ring with identity, and consider the character group $\widehat{R}$. As mentioned we have $R\cong \widehat{R}$ as groups. Moreover, in this case, the character group~$\Rhat$ carries an $R$-module structure via scalar multiplication
\begin{equation}\label{e-bimodule}
    (r\!\cdot\!\chi)(v)=\chi(vr) \text{ for all } r\in R\text{ and }v\in R.
\end{equation}

%%%%%%%%%%%%%%%%%%%%%%%%%%%%
\begin{defi}\label{D-Frob}
A finite commutative ring~$R$ is called \textbf{Frobenius} if $\Rhat\cong R$ as $R$-modules.
\end{defi}
%%%%%%%%%%%%%%%%%%%%%%%%%%%%%%%%%%%%%
It is an immediate consequence of the definition that in this case there exists a character~$\chi$ such that $\Rhat= R\!\cdot\!\chi$.
Any such character is called a \textbf{generating character} of~$R$.
With the aid of \eqref{e-ker0} the reader will verify that a character $\chi \in \widehat{R}$ is generating iff $\ker \chi$ contains no non-zero ideals. It follows from this equivalence that any two generating characters~$\chi,\,\chi'$ differ by a unit, i.e., $\chi'=u\!\cdot\!\chi$ for some $u\in R^*$.

Frobenius rings have been historically defined via the socle $\soc(R)$ and the Jacobson radical $\rad(R)$; see Theorem \ref{T-Frob} below. This character-theoretic approach has been exploited in detail in \cite{CG92, W99, HOnold01}. Frobenius rings have been characterized by Wood \cite[Thm 6.3, Thm. 6.4]{W99} as those commutative rings that satisfy \textbf{MacWilliams Extension Theorem} \cite{MacWilliams62} for the Hamming weight. Classical coding theory over finite Frobenius rings is a well established area. This paper, along with \cite{NK12,me2}, provide yet another evidence of the importance of Frobenius rings in quantum error-correction. 

\begin{theo}\label{T-Frob}
Let $R$ be a finite commutative ring. Then the following are equivalent.
\begin{arabiclist}
\item $R$ is Frobenius.
\item $\soc(R) \cong R/\rad(R)$ as $R$-modules.
\item There exists $\alpha \in R$ such that $\soc(R) = \alpha R$.
\end{arabiclist}
\end{theo}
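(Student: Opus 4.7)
The plan is to prove the cycle $(1)\Rightarrow(2)\Rightarrow(3)\Rightarrow(1)$, with character-theoretic duality as the main engine. The key preliminary is to establish, independent of any Frobenius hypothesis, the isomorphism $\soc(\Rhat)\cong R/\rad(R)$ as $R$-modules. For this I would introduce the orthogonal $I^{\perp}:=\{\chi\in\Rhat : \chi|_{I}=\varepsilon_{I}\}$ for an ideal $I\subseteq R$, observe that $I^{\perp}$ is an $R$-submodule of $\Rhat$ naturally isomorphic to $\widehat{R/I}$, and check directly from the module structure \eqref{e-bimodule} that $\soc(\Rhat)=\{\chi : (\rad R)\chi=0\}$ coincides with $(\rad R)^{\perp}$. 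Since $R/\rad(R)$ is a finite semisimple commutative ring, it decomposes as a product $\prod_{i=1}^{m} k_{i}$ of finite fields, each self-dual as a module over itself via a trace-style generating character; hence $\soc(\Rhat)\cong\widehat{R/\rad(R)}\cong R/\rad(R)$.

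With this in hand, $(1)\Rightarrow(2)$ is immediate: applying $\soc$ to $R\cong\Rhat$ yields $\soc(R)\cong R/\rad(R)$. For $(2)\Rightarrow(3)$, note that $R/\rad(R)$ is cyclic (generated by $1+\rad(R)$), so any isomorphic copy is cyclic, and any generator $\alpha$ of $\soc(R)$ gives $\soc(R)=\alpha R$.

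The main step is $(3)\Rightarrow(1)$. Since $R$ is Artinian, every nonzero ideal contains a minimal one, so to produce a generating character it suffices to construct $\chi\in\Rhat$ whose kernel meets no minimal ideal. First I would identify $\soc(R)$ precisely: from $\rad(R)\cdot\alpha\subseteq\rad(R)\cdot\soc(R)=0$ we get $\ann(\alpha)\supseteq\rad(R)$, whence $\alpha R$ is a quotient of $R/\rad(R)\cong\prod k_{i}$, thus of the form $\prod_{i\in S} k_{i}$. But decomposing $R$ as a product of local rings shows that each $k_{i}$ must appear as a simple submodule of $\soc(R)=\alpha R$, forcing $S=\{1,\ldots,m\}$ and $\ann(\alpha)=\rad(R)$. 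Choosing a character $\psi$ of $\soc(R)\cong\prod k_{i}$ whose coordinate $\psi_{i}$ is non-principal on each $k_{i}$, and lifting $\psi$ to $\chi\in\Rhat$ via the surjection $\Rhat\twoheadrightarrow\widehat{\soc(R)}$ dual to the inclusion $\soc(R)\hookrightarrow R$, one obtains a character whose kernel contains no minimal ideal of $R$, hence no nonzero ideal; thus $\chi$ is generating and $R$ is Frobenius.

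The crux is the socle-radical duality $\soc(\Rhat)\cong R/\rad(R)$ together with the coordinate-wise lifting in $(3)\Rightarrow(1)$: while the existence of the lift $\chi$ is automatic from Pontryagin duality for finite abelian groups, it is the cyclicity of $\soc(R)$---equivalently, the splitting of $\soc(R)$ into exactly one copy of each residue field---that guarantees $\ker\chi$ can simultaneously avoid every minimal ideal.
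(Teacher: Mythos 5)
The paper itself offers no proof of Theorem~\ref{T-Frob}: it is quoted as the classical socle/radical characterization of finite commutative Frobenius rings, with the character-theoretic background delegated to the cited references. So there is no in-paper argument to compare against; what can be said is that your proposal is a correct, self-contained reconstruction of the standard proof. The preliminary duality $\soc(\Rhat)\cong\widehat{R/\rad(R)}\cong R/\rad(R)$ is right: $\soc(\Rhat)=\ann_{\Rhat}(\rad R)$ because $R$ is Artinian, this equals $(\rad R)^{\perp}\cong\widehat{R/\rad(R)}$ under the module structure \eqref{e-bimodule}, and the trace characters of the residue fields give the last isomorphism. Steps $(1)\Rightarrow(2)$ and $(2)\Rightarrow(3)$ are then immediate as you say. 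In $(3)\Rightarrow(1)$ you correctly isolate the crux: writing $R=\prod_j R_j$ as a product of local rings, cyclicity of $\soc(R)$ forces each $\soc(R_j)$ to be a single copy of the residue field $k_j$, so the minimal ideals of $R$ are exactly the $m$ coordinate factors of $\soc(R)\cong\prod_i k_i$; a character of $\soc(R)$ that is non-principal in every coordinate therefore kills no minimal ideal, its lift to $\Rhat$ (restriction of characters being surjective) has kernel containing no nonzero ideal, and the resulting generating character yields $\Rhat=R\cdot\chi\cong R$ by injectivity of $r\mapsto r\cdot\chi$ plus $|R|=|\Rhat|$. The only points worth making fully explicit in a written version are this last counting step (the paper only asserts the equivalence ``generating iff $\ker\chi$ contains no nonzero ideal'' without proof) and the fact that a simple submodule of a multiplicity-free semisimple module must be one of its isotypic components, which is what rules out ``diagonal'' minimal ideals. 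Neither is a gap; the argument stands.
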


For a Frobenius ring, the socle and the Jacobson radical are very closely related. Namely, let $\ann(I)$ the annihilator of the ideal $I\subseteq R$. Then 
\begin{equation}\label{e-RadSoc}
\ann(\rad(R)) = \soc(R) \text{ and } \ann(\soc(R)) = \rad(R);
\end{equation}
see \cite[Cor. 15.7]{lam-lmr} for instance. 

\begin{rem}\label{R-LFR}
Let $R$ be a local Frobenius ring with unique maximal ideal $\mm$. Denote $R/\mm:=\F_q$ the residue field. For $r\in R$ we will denote $\ov{r}:=r+\mm \in R/\mm$. Then of course $\rad(R) = \mm$ and $\soc(R) = \alpha R$ for some $\alpha \in R$ as in Theorem \ref{T-Frob}(3). In this case \eqref{e-RadSoc} reduces to 
\begin{equation}\label{e-ann}
\ann(\mm) = \alpha R \text{ and } \ann(\alpha R) = \mm.
\end{equation}
Thanks to \eqref{e-ann} we obtain a well-defined isomorphism $\rho : \alpha R \longrightarrow \F_q$ via $\alpha r \longmapsto \ov{r}$. Moreover, we obtain a well-defined multiplication 
\begin{equation}\label{e-FqVS}
\ov{r}\cdot x = rx, \text{ for all } \ov{r}\in \F_q \text{ and } x \in \alpha R,
\end{equation}
which makes $\alpha R$ a $\F_q$-vector space. In particular, for any $n\in \N$, $\F_q$-linear maps and $R$-linear maps of $(\alpha R)^n$ coincide.
\end{rem}

\section{Stabilizer Codes}\label{S-3}
In this section we define stabilizer codes over  Frobenius rings and motivate the definitions by drawing connections with quantum error-correction and quantum stabilizer codes. The approach was first studied in \cite{NK12} where the authors generalize the definitions of non-binary quantum stabilizer codes \cite{AK01,KKKS06} by making use of the existence of a generating character.

Let $R$ be a finite Frobenius ring with cardinality $|R| = d$ and generating character $\chi$. Fix an orthonormal basis $\cB = \{v_x\mid x\in R\}$ of $\C^d$ indexed by the ring elements. The pair $(\C^d, \cB)$ is called a \textbf{qudit}. For $a\in R$ define the two linear maps $X(a),\,Z(a):\C^{q}\longrightarrow\C^{q}$ where their action on the
basis~$\cB$ given by
\begin{equation}\label{e-XZ}
  X(a)(v_x)=v_{x+a},\quad Z(a)(v_x)=\chi(ax)v_x \text{ for all }x\in R.
\end{equation}
A \textbf{$n$-qudit} is the pair $(\C^{d^n}, \cB^{\otimes n})$ where
\begin{equation}\label{e-Bn}
  \cBn=\{v_x=v_{x_1}\otimes\ldots\otimes v_{x_n}\mid x=(x_1,\ldots,x_n)\in R^n\},
\end{equation}
and we identify $\C^{d^n} \cong (\C^d)^{\otimes n}$. For $a=(a_1,\ldots,a_n)\in R^n$ set
\begin{equation}\label{e-XZtensor}
  X(a)=X(a_1)\otimes\ldots\otimes X(a_n),\quad Z(a)=Z(a_1)\otimes\ldots\otimes Z(a_n).
\end{equation}
If we use the standard dot product in $R^n$, that is, $ax=a\cdot x=\sum_{i=1}^n a_i x_i$, then \eqref{e-XZtensor} reads as
\begin{equation}\label{e-XZ}
  X(a)(v_x)=v_{x+a}\ \text{ and }
  Z(a)(v_x)=\chi(a x)v_x \ \text{ for all }a,\,x\in R^n.
\end{equation}
Using properties of characters it is easy to see that $X(a),\,Z(a)$ are unitary maps for all $a\in R^n$.
%%%%%%%%%%%%%%%%%%%%%%%%%%%%%%
\begin{lem}[\mbox{\cite[Prop.~4 and proof]{NK12}}]\label{L-MultEn}
Let $(a,b),(a',b')\in R^{2n}$ and consider the unitary maps $P=X(a)Z(b),\,P'=X(a')Z(b')$. Then
\[
   PP'=\chi(b a')X(a+a')Z(b+b')\ \text{ and }\ P'P=\chi(b'a)X(a+a')Z(b+b').
\]
As a consequence,
\[
    PP'=P'P\Longleftrightarrow \chi(b a'-b' a)=1.
\]
\end{lem}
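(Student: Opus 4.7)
The plan is a direct computation on the basis $\cBn$. By \eqref{e-XZ}, for any $x \in R^n$ we have
\[
 P(v_x) = X(a)Z(b)(v_x) = X(a)\bigl(\chi(bx)v_x\bigr) = \chi(bx)\,v_{x+a},
\]
and an analogous formula holds for $P'$. The only ingredient besides the definitions is that $\chi$ is a group homomorphism from $(R,+)$ to $\C^*$, so $\chi(u+v) = \chi(u)\chi(v)$ for all $u,v\in R$; this and bilinearity of the dot product will do all of the work.

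Composing and applying this homomorphism property, I compute
\[
 PP'(v_x) = P\bigl(\chi(b'x)v_{x+a'}\bigr) = \chi(b'x)\,\chi\bigl(b(x+a')\bigr)\,v_{x+a+a'} = \chi(ba')\,\chi\bigl((b+b')x\bigr)\,v_{x+a+a'},
\]
and the right-hand side is precisely $\chi(ba')\,X(a+a')Z(b+b')(v_x)$. Since this holds for each basis vector, the first identity of the lemma follows. The formula for $P'P$ is obtained by swapping the roles of $(a,b)$ and $(a',b')$, which replaces $\chi(ba')$ by $\chi(b'a)$ while leaving the rest unchanged (the sums $a+a'$ and $b+b'$ are symmetric).

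For the commutativity criterion, the two formulas give $PP' = P'P$ if and only if $\bigl(\chi(ba')-\chi(b'a)\bigr)X(a+a')Z(b+b') = 0$. Since $X(a+a')Z(b+b')$ is unitary and hence nonzero, this reduces to $\chi(ba') = \chi(b'a)$, equivalently $\chi(ba'-b'a) = 1$, again by the homomorphism property of $\chi$. I do not anticipate any real obstacle here: the argument is a careful bookkeeping of scalars across the tensor factors, and the fact that $\chi$ is a \emph{generating} character is not needed for this lemma; the mere homomorphism property suffices.
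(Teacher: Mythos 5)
Your computation is correct and is exactly the standard argument the paper delegates to \cite[Prop.~4]{NK12}: evaluate $X(a)Z(b)$ on the basis $\cBn$ using \eqref{e-XZ}, use that $\chi$ is a homomorphism to split $\chi(b(x+a'))$, and cancel the nonzero operator $X(a+a')Z(b+b')$ to get the commutation criterion. You are also right that only the homomorphism property of $\chi$ is needed here, not that it is generating.
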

Now we are ready to define the Pauli group; see \cite[Section 3]{me2} for the details.

%%%%%%%%%%%%%%%%%%%%%%%%%%%%%%%%%
\begin{defitheo}\label{D-Pauli}
Let $\Char(R)=c$ and let $\omega \in \C^*$ be a $\ov{c}$-primitive root of unity where 
\begin{equation}\label{e-cbar}
   \ov{c}=\left\{\!\!\begin{array}{cl}c,&\text{if $c$ is odd,}\\[.5ex] 2c,&\text{if~$c$ is even.}\end{array}\right.
\end{equation}
The set
\begin{equation}
   \cP_n:=\{\omega^\ell X(a)Z(b)\mid \ell\in\Z,\,a,b\in R^{n}\}
\end{equation}
 is a subgroup of the unitary group~$\cU(d^n)$, called the \textbf{$n$-qudit Pauli group}.
The elements of~$\cP_n$ are called \textbf{Pauli operators}.
Furthermore, the map
\begin{equation}\label{e-PsiN}
  \Psi:\cP_n\longrightarrow R^{2n},\quad \omega^\ell X(a)Z(b)\longmapsto (a,b),
\end{equation}
is a surjective group homomorphism with $\ker\Psi=\{\omega^\ell I\mid \ell\in\Z\}$.
The latter is also the center of~$\cP_n$.
\end{defitheo}
%%%%%%%%%%%%%%%%%%%%%%%%%%%%%%%%%%%

%%%%%%%%%%%%%%%%%%%%%%%%%%%%%%%%%%
\begin{defi}\label{D-Inner}
The \textbf{symplectic inner product} on $R^{2n}$ is defined as
\[
   \inners{\,\sbt\,}{\,\sbt\,}: R^{2n}\times R^{2n}\longrightarrow R,\quad \inners{(a,b)}{(a',b')}=ba'-b'a.
\]
For a subset ~$X\subseteq R^{2n}$ we define $X^\perp:=\{v\in R^{2n}\mid \inners{v}{w}=0\text{ for all }w\in X\}$.
If $X$ is a submodule of~$R^{2n}$ we call $X^\perp$ the \textbf{dual module}.
As usual, $X$ is called \textbf{self-orthogonal} (resp,\ \textbf{self-dual}) if $X\subseteq X^\perp$ (resp., $X=X^\perp$).
\end{defi}
%%%%%%%%%%%%%%%%%%%%%%%%%%%%%%%%%%

%%%%%%%%%%%%%%%%%%%%%%%%%%%%%%
\begin{prop}\label{P-ChiInner}
Let $X\subseteq R^{2n}$ be a submodule. Then
\[
  X^{\perp}=\{v\in R^{2n}\mid \chi(\inners{v}{w})=1\text{ for all }w\in X\}.
\]
\end{prop}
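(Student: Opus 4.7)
The plan is to prove the two inclusions separately, with only one of them requiring real content. Let me denote by $Y$ the set on the right-hand side.

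The inclusion $X^{\perp}\subseteq Y$ is immediate: if $\inners{v}{w}=0$ for all $w\in X$, then $\chi(\inners{v}{w})=\chi(0)=1$ since any group homomorphism sends $0$ to $1$.

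For the reverse inclusion $Y\subseteq X^{\perp}$, the plan is to leverage the Frobenius hypothesis through the characterization of generating characters as those whose kernel contains no nonzero ideal. Fix $v\in Y$ and consider the set
\[
   I_v := \{\inners{v}{w}\mid w\in X\}\subseteq R.
\]
The first step is to verify that $I_v$ is an ideal of $R$. This follows from the bilinearity of $\inners{\sbt}{\sbt}$ in the second argument together with the assumption that $X$ is a submodule of $R^{2n}$: for $w,w'\in X$ and $r\in R$, we have $w+w'\in X$ and $rw\in X$, and a direct check using the definition $\inners{(a,b)}{(a',b')}=ba'-b'a$ shows $\inners{v}{w+w'}=\inners{v}{w}+\inners{v}{w'}$ and $\inners{v}{rw}=r\inners{v}{w}$.

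The second step is the key observation: the hypothesis $v\in Y$ translates precisely into $I_v\subseteq \ker\chi$. Since $\chi$ is a generating character of the Frobenius ring $R$, by the characterization recalled right after Definition \ref{D-Frob}, its kernel contains no nonzero ideal of $R$. Therefore $I_v=\{0\}$, which means $\inners{v}{w}=0$ for every $w\in X$, i.e., $v\in X^\perp$. There is no real obstacle here; the proof essentially amounts to packaging the ideal $I_v$ and invoking the defining property of a generating character, so the only thing to be careful about is correctly checking that $I_v$ is an ideal (as opposed to merely an additive subgroup), which is where the $R$-bilinearity of the symplectic form and the $R$-module structure on $X$ are used in an essential way.
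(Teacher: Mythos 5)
Your proof is correct and follows exactly the route the paper intends: the forward inclusion is immediate from $\chi(0)=1$, and the reverse inclusion is the paper's one-line appeal to the fact that a generating character's kernel contains no nonzero ideals, which you have simply unpacked by observing that $I_v=\{\inners{v}{w}\mid w\in X\}$ is an ideal (the image of the submodule $X$ under the $R$-linear map $w\mapsto\inners{v}{w}$) contained in $\ker\chi$. No differences worth noting.
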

%%%%%%%%%%%%%%%%%%%%%%%%%%%%%%
\begin{proof}
The forward containment is obvious. The other containment follows by the fact that the kernel of a generating character does not contain any non-zero ideals; see also \cite[Prop. 3.9]{me2}.
\end{proof}

%%%%%%%%%%%%%%%%%%%%%%%%%%%%
\begin{defi}\label{D-Stab}
\begin{arabiclist}
\item A subgroup~$S$ of~$\cP_n$  is called a \textbf{stabilizer} if
      \[
         S\text{ is abelian\quad and\quad}S\cap\ker\Psi=\{I_{d^n}\}.
      \]
\item A submodule~$\cC$ of~$R^{2n}$ is called a \textbf{stabilizer code} if $\cC=\Psi(S)$ for some stabilizer~$S\leq\cP_n$.
\item A subspace~$\cQ$ of~$\C^{d^n}$ is called a \textbf{quantum stabilizer code} if there exists a stabilizer $S\leq\cP_n$ such that
      \[
         \cQ=\cQ(S):=\{v\in\C^{d^n}\mid Pv=v\text{ for all }P\in S\}.
      \]
If $\dim\cQ=1$, then~$\cQ$ is also called a \textbf{stabilizer state}.
\end{arabiclist}
\end{defi}
%%%%%%%%%%%%%%%%%%%%%%%%%%%%%%%%%%

%%%%%%%%%%%%%%%%%%%%%%%%%%
\begin{theo}\label{P-StabCodeSO}
Let $\cC\subseteq R^{2n}$ be a submodule. Then
\[
  \cC\text{ is a stabilizer code }\Longleftrightarrow \cC\subseteq \cC^\perp.
\]
Thus, the stabilizer codes are exactly the self-orthogonal submodules with respect to the symplectic inner product. In particular, stabilizer states correspond to self-dual stabilizer codes.
\end{theo}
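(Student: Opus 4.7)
The statement is a biconditional followed by the claim about stabilizer states; I would handle each piece separately.

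\emph{($\Rightarrow$):} Suppose $\cC=\Psi(S)$ for some stabilizer $S\le\cP_n$. For any $v,w\in\cC$, lift them via $\Psi$ to operators $P,P'\in S$. Since $S$ is abelian, $PP'=P'P$, and Lemma~\ref{L-MultEn} translates this commutation directly into $\chi(\inners{v}{w})=1$. As this holds for every $w\in\cC$, Proposition~\ref{P-ChiInner} applied to the submodule $\cC$ places $v\in\cC^\perp$. Letting $v$ range over $\cC$ gives $\cC\subseteq\cC^\perp$. This direction is essentially a bookkeeping use of Lemma~\ref{L-MultEn} together with the characterization of the dual via $\chi$.

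\emph{($\Leftarrow$):} Now assume $\cC\subseteq\cC^\perp$. For $(a,b),(a',b')\in\cC$ one has $ba'=b'a$ in $R$ (not merely $\chi(ba'-b'a)=1$), hence $\chi(ba')=\chi(b'a)$; by Lemma~\ref{L-MultEn} the operators $X(a)Z(b)$ and $X(a')Z(b')$ commute. Thus the set $\{X(a)Z(b):(a,b)\in\cC\}$ consists of pairwise commuting Paulis. The remaining task is to promote this set to an abelian subgroup $S\le\cP_n$ with $\Psi(S)=\cC$ and $S\cap\ker\Psi=\{I\}$. The obstacle is that the product law
\[
X(a)Z(b)\,X(a')Z(b')=\chi(ba')\,X(a+a')Z(b+b')
\]
introduces a scalar $\chi(ba')\in\ker\Psi$. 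One must therefore find phases $\mu(v)\in\langle\omega\rangle$, $v=(a,b)\in\cC$, satisfying $\mu(v)\mu(w)\chi(ba')=\mu(v+w)$, so that $S:=\{\mu(v)X(a)Z(b):v\in\cC\}$ is closed under multiplication; trivial intersection with $\ker\Psi$ then follows from the definition. Self-orthogonality makes the cocycle $\sigma(v,w)=ba'$ symmetric in $v,w$, so the problem is one of trivializing a symmetric $2$-cocycle on the finite abelian group $\cC$ valued in a cyclic group of roots of unity. I would resolve it constructively: fix a generating set of $\cC$, assign each generator a Pauli of the correct order (the two cases of \eqref{e-cbar} enter here), and propagate $\mu$ along the relations in $\cC$. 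This is the genuine technical step and the main obstacle, worked out in comparable generality in \cite{NK12,me2}. The resulting $S$ is the desired stabilizer with $\Psi(S)=\cC$.

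\emph{Stabilizer states.} For the last sentence, I would invoke the standard dimension identity $\dim\cQ(S)\cdot|\Psi(S)|=d^n$ together with the perfect pairing $|\cC|\cdot|\cC^\perp|=d^{2n}$ induced on $R^{2n}$ by $\chi$: then $\dim\cQ=1$ is equivalent to $|\cC|=d^n=|\cC^\perp|$, which combined with the already-established inclusion $\cC\subseteq\cC^\perp$ forces $\cC=\cC^\perp$. Thus stabilizer states correspond precisely to self-dual stabilizer codes.
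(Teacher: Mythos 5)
Your proposal is correct and follows essentially the same route as the paper: the forward direction via Lemma~\ref{L-MultEn} and Proposition~\ref{P-ChiInner}, the backward direction by constructing a stabilizer with suitably chosen phases (the paper simply defers this construction to \cite[Thm.~3.2]{me2}, which you sketch accurately as a cocycle-trivialization problem), and the stabilizer-state claim from $\dim_{\C}\cQ(S)=d^n/|S|$ together with $|\cC|\cdot|\cC^\perp|=d^{2n}$. No gaps.
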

%%%%%%%%%%%%%%%%%%%%%%%%%%%%%
\begin{proof}
The forward direction follows directly from Lemma \ref{L-MultEn} and Proposition \ref{P-ChiInner}. For the backward direction we refer the reader to \cite[Thm. 3.2]{me2}, where a stabilizer that satisfies $\Psi(S) = \cC$ is constructed. The last statement follows by the fact that $\dim_{\C}\cQ(S) = d^n/|S|$; see also \cite[Thm. 3.14]{me2}.
\end{proof}

\begin{defi}\label{D-Weight}\begin{arabiclist}
\item The \textbf{symplectic weight of a vector} $(a,b)=(a_1,\ldots,a_n,b_1,\ldots,b_n)\in R^{2n}$ is defined as
       \[
          \swt(a,b):=|\{i\mid (a_i,b_i)\neq(0,0)\}|.
       \]
The \textbf{symplectic weight of a Pauli operator} $P = \omega^lX(a)Z(b)$ is $\swt(P):=\swt(a,b)$.
\item The \textbf{minimum distance} of a  stabilizer code $\cC$ is 
\[
\dist(\cC) =  \left\{\begin{array}{ll} \!\!\min\{\swt(v) \mid v \in \cC^{\perp} - \cC\}, & \text{ if } \cC \subsetneq \cC^\perp \\ \!\!\min\{\swt(v) \mid v \in \cC - \{0\}\}, & \text{ if } \cC = \cC^\perp\end{array}.\right.
\]
\end{arabiclist}
\end{defi}

For the physical significance of the Pauli group, symplectic weight, and minimum distance we refer the reader to \cite{CRSS98,Gottesman96,KL97}. Note that by the very definition $\Psi$ is weight preserving.

\section{Symplectic Isometries}\label{Sec-Symp}
The study of symplectic isometries was initiated in \cite{me2} as a tool to understand the equivalence of quantum stabilizer codes. It was observed in \cite{me2} that not all symplectic isometries are monomial maps. A natural problem \cite[Q. 7.4]{me2} then is to establish how far from being a monomial map a symplectic isometry is. In this section we extend the study by making use of the work of Wood \cite{Wood16}. The crucial idea is to view the symplectic weight as the Hamming weight over $R^2$ and make use of classical machinery. We start with a change of coordinates that facilitates this. Namely, we use
\begin{equation}\label{e-gamma}
  \gamma: R^{2n}\longrightarrow (R^2)^n,\quad  (a_1,\ldots,a_n\mid b_1,\ldots,b_n)\longmapsto (a_1,b_1\mid a_2,b_2\mid \ldots\mid a_n,b_n).
\end{equation}
Thus for $x=(a_1,b_1\mid  \ldots\mid a_n,b_n)$ we have
\begin{equation}
\hwt(x) :=|\{i\mid (a_i,b_i) \neq (0,0)\}=\swt(\gamma^{-1}(x)),
\end{equation} that is, the Hamming weight on $(R^2)^n$ is the pullback of the symplectic weight on $R^{2n}$. In order to transfer the problem completely to $(R^2)^n$ we need to also pull back the symplectic inner product. Namely, we define
\begin{equation}\label{e-F}
  \inner{x}{y}:=\inners{\gamma^{-1}(x)}{\gamma^{-1}(y)} = \sum_{i=1}^n x_iJy_i\T, 
\end{equation}
for all $x,y\in (R^2)^n$, where $x_i, y_i \in R^2$ and
\begin{equation}
J = \left(\!\!\begin{array}{cr}0&-1\\1&0\end{array}\!\!\right).
\end{equation}
%%%%%%%%%%%%%%%%%%%%%%%%%%%%%%%%%%%%%%%%%%%%%%%%%%%%%%
\begin{defi}\label{D-Iso}
Let $\cC\subseteq R^{2n}$ be a stabilizer code and $f:\cC\longrightarrow R^{2n}$ be a linear map.
Then~$f$ is called a \textbf{symplectic isometry} if $\swt(a) = \swt(f(a))$ and
$\inners{a}{b} = \inners{f(a)}{f(b)}$ for all $a,b \in \cC$.
Two stabilizer codes $\cC\, ,\cC'\subseteq R^{2n}$ are called \textbf{symplectically isometric} if there exists a symplectic isometry
$f:\cC\longrightarrow R^{2n}$ such that $f(\cC)=\cC'$.
\end{defi}
%%%%%%%%%%%%%%%%%%%%%%%%%%%%%%%%%%%%%%%%%%%%%%%%%%%%%%%%%
For a linear map $f:R^{2n}\longrightarrow R^{2n}$ we define $\tilde{f}:=\gamma\circ f\circ\gamma^{-1}: (R^2)^n\longrightarrow(R^2)^n$ 
as in the following commutative diagram
\begin{equation}\label{e-Diagram}
\begin{array}{l}
   \begin{xy}
   (0,0)*+{(R^2)^n}="a"; (20,0)*+{(R^2)^n}="b";%
   (0,20)*+{R^{2n}}="c"; (20,20)*+{R^{2n}}="d";%
   {\ar "a";"b"}?*!/_-3mm/{\tilde{f}};
   {\ar "c";"d"}?*!/_3mm/{f};%
   {\ar "c";"a"}?*!/_-2mm/{\gamma};
   {\ar "d";"b"}?*!/_2mm/{\gamma};
   \end{xy}
\end{array}
\end{equation}
To resume, we obtain the following equivalences
\begin{equation}\label{e-ffhatw}
   f \text{ preserves $\swt$}\Longleftrightarrow \tilde{f}\text{ preserve $\hwt$},
\end{equation}
and
\begin{equation}\label{e-ffhatp}
   f \text{ preserves }\inners{\sbt}{\sbt}\Longleftrightarrow \tilde{f}\text{ preserves }\inner{\sbt}{\sbt}.
\end{equation}
We call $\tilde{f}$ a symplectic isometry if $f$ is. With this notation we obtain the structure of symplectic isometries of $R^{2n}$.

\begin{theo}[\mbox{\cite[Thm. 7.1]{me2}}]\label{T-IsoR2n}
Let $f:R^{2n}\longrightarrow R^{2n}$ be a linear map. 
Then~$f$ is a symplectic isometry iff the matrix representation of $\tilde{f}$ in $\cM_{2n}(R)$ with respect to the standard basis is a block matrix of the form
\begin{equation}\label{e-IsoR2n}
  \diag(A_1,\ldots,A_n)(P\otimes I_2),\ 
\end{equation}
$\text{ where }A_i\in\SL_2(R)\text{ and }P\in S_n$ is a permutation matrix.
\end{theo}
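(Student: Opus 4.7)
The plan is to conjugate by $\gamma$ and work entirely on $(R^2)^n$: by \eqref{e-ffhatw} and \eqref{e-ffhatp} it suffices to show that an $R$-linear endomorphism $\tilde{f}$ of $(R^2)^n$ preserves both $\hwt$ and $\inner{\sbt}{\sbt}$ if and only if its matrix in $\cM_{2n}(R)$ has the form \eqref{e-IsoR2n}. The backward direction is direct verification: the permutation factor $P \otimes I_2$ acts by block-permutation on $(R^2)^n$ and so trivially preserves both $\hwt$ and $\inner{\sbt}{\sbt}$, while for each $A_i \in \SL_2(R)$ the identity $A_i\T J A_i = \det(A_i) J = J$ shows that $A_i$ preserves the $2 \times 2$ form $(x,y) \mapsto xJy\T$ on $R^2$; invertibility of $A_i$ furthermore preserves nonvanishing of the $i$-th block, hence the contribution to $\hwt$.

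For the forward direction, decompose $(R^2)^n = \bigoplus_{i=1}^n V_i$, where $V_i \cong R^2$ is the $i$-th summand. Weight preservation forces $\tilde{f}$ to be injective, hence bijective on the finite module $R^{2n}$. I claim there exists $\sigma \in S_n$ with $\tilde{f}(V_i) = V_{\sigma(i)}$ for every $i$. Let $e_i^{(1)}, e_i^{(2)}$ be the standard generators of $V_i$. Since each has $\hwt = 1$, their images are nonzero and each lies in some block $V_{j_1}$ and $V_{j_2}$ respectively. If $j_1 \neq j_2$, then $\tilde{f}(e_i^{(1)} + e_i^{(2)}) = \tilde{f}(e_i^{(1)}) + \tilde{f}(e_i^{(2)})$ would have nonzero components in two distinct blocks and hence weight $2$, contradicting $\hwt(e_i^{(1)} + e_i^{(2)}) = 1$. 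Thus $j_1 = j_2 =: \sigma(i)$, and $R$-linearity yields $\tilde{f}(V_i) \subseteq V_{\sigma(i)}$; bijectivity of $\tilde{f}$ then forces $\sigma \in S_n$.

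Each restriction $\tilde{f}|_{V_i}\colon V_i \to V_{\sigma(i)}$ is therefore a bijective $R$-linear map between free rank-$2$ $R$-modules and is represented by some $A_{\sigma(i)} \in \GL_2(R)$; assembling these blocks yields exactly the matrix form $\diag(A_1, \ldots, A_n)(P \otimes I_2)$ with $P$ the permutation matrix of $\sigma$. It remains to upgrade $A_i$ from $\GL_2(R)$ to $\SL_2(R)$. Because $\inner{\sbt}{\sbt}$ splits as a sum of $2 \times 2$ block-forms with no cross-block contributions, preservation of $\inner{\sbt}{\sbt}$ by $\tilde{f}$ restricts to the identity $A_i\T J A_i = J$ within each block; via the general $2\times 2$ identity $A_i\T J A_i = \det(A_i) J$ this is equivalent to $\det A_i = 1$. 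The main subtlety of the argument is the block-permutation step in the second paragraph, pinning down $\sigma$ from weight preservation alone; once that is in place the isometry condition collapses to the $2 \times 2$ determinantal identity above.
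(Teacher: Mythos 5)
Your proof is correct and is essentially the expected argument: the paper itself does not prove Theorem \ref{T-IsoR2n} but cites it from \cite[Thm.\ 7.1]{me2}, and your route --- transporting everything to $(R^2)^n$ via $\gamma$, deducing the block permutation $\sigma$ purely from preservation of $\hwt$ applied to the generators $e_i^{(1)}, e_i^{(2)}$ and their sum, and then using $AJA\T=\det(A)J$ together with preservation of $\inner{\sbt}{\sbt}$ to force $\det A_i=1$ --- is the natural one and works over any finite commutative ring. The only cosmetic point is the transposition convention: with inputs on the left the invariance identity reads $A_iJA_i\T=J$ rather than $A_i\T JA_i=J$, but both are equivalent to $\det A_i=1$ via the same $2\times 2$ identity.
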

%%%%%%%%%%%%%%%%%%%%%%%%%%%%%%%%%%%%%%%%%%%%%%%%%%%%%%%%%%%%%%%%%

\begin{defi}
The map $\widetilde{f}$ as in \eqref{e-IsoR2n} is called a $\SL_2(R)$-\textbf{monomial map}. We will denote $\Mon_{\rm{SL}}((R^2)^n)$ the group of $\SL_2(R)$-monomial maps of $(R^2)^n$. The group of $\SL_2(R)$-monomial maps of $R^{2n}$ is given by
\begin{equation}
\Mon_{\rm{SL}}(R^{2n}):=\{\gamma^{-1}\widetilde{f}\gamma\mid\widetilde{f}\in \Mon_{\rm{SL}}((R^2)^n)\}.
\end{equation} 
The map $\widetilde{f}$ is called a \textbf{monomial map} if $A_i \in \GL_2(R)$ in \eqref{e-IsoR2n}. We will denote $\Mon((R^2)^n)$ and $\Mon(R^{2n})$ the groups of monomial maps of $(R^2)^n$ and $R^{2n}$ respectively. If two stabilizer codes are symplectially isometric via a $\SL_2(R)$-monomial map we call them \textbf{monomially equivalent}.
\end{defi}
%%%%%%%%%%%%%%%%%%%%%%%%%%%%%%%%%%%%%%%%%%%%%%%%
We will be using the term ``$(\SL_2(R)$-) monomial map" interchangeably and it should be clear from context whether we work over $(R^2)^n$ or $R^{2n}$. Theorem \ref{T-IsoR2n} implies that all the symplectic isometries of $R^{2n}$ are $\SL_2(R)$-monomial maps. On the the other hand, again thanks to Theorem \ref{T-IsoR2n}, we have that monomial maps preserve the symplectic weight, but not necessarily the symplectic inner product. 
%%%%%%%%%%%%%%%%%%%%%%%%%%%%%%%%%%%%%%%%%%%%%%%%

We have two particularly nice symplectic isometries. They are in fact $\SL_2(R)$-monomial maps, and they are naturally related with a normal form of stabilizer codes; see \cite[Thm. 4.8]{me2}.
\begin{exa}\label{Ex-Wyel}
\begin{arabiclist}
\item For every permutation $\sigma\in S_n$ define the map $\tau_\sigma: R^{2n}\longrightarrow R^{2n}$ given by
      \[
        (a_1,\ldots,a_n,b_1,\ldots,b_n)\longmapsto (a_{\sigma(1)},\ldots,a_{\sigma(n)},b_{\sigma(1)},\ldots,b_{\sigma(n)}).
      \]
It is clear that $\tau_\sigma$ is a symplectic isometry. Then, $\widetilde{\tau_\sigma}$ has matrix representation $P_\sigma\otimes I_2$. 
\item For every $i\in\{1,\ldots, n\}$ we define the map $\tau_i: R^{2n}\longrightarrow R^{2n}$ given by
      \[
        (a_1,\ldots,a_n,b_1,\ldots,b_n)\longmapsto(a_1,\ldots,a_{i-1},b_i,a_{i+1},\ldots,a_n,b_1,\ldots,b_{i-1},-a_i,b_{i+1},\ldots,b_n).
      \]
Then $\tau_i$ clearly preserves the symplectic weight. It also preserves the symplectic inner product, since for $(a,b),\,(a',b')\in R^{2n}$ we have
\[
  \inners{\tau_i(a,b)}{\tau_i(a',b')}=\sum_{j\neq i} b_ja'_j-a_ib'_i-\sum_{j\neq i} a_jb'_j+b_ia'_i
   =\sum_{j=1}^n b_ja'_j-\sum_{j=1}^n a_jb'_j=\inners{(a,b)}{(a',b')}.
\]
Moreover, the matrix representation of $\widetilde{\tau_i}$ is $\diag(I,\cdots,I,J,I,\cdots,I)\in \SL_{2n}(R)$, with $J$ at the $i$-th diagonal position.
\end{arabiclist}
\end{exa}

Theorem \ref{T-IsoR2n} heavily relies on the fact that the isometry was defined on the entire space $R^{2n}$. As we will see, the result is no longer true if we start with a stabilizer code $\cC\subseteq R^{2n}$. In particular this means that the structure of symplectic isometries between stabilizer codes is yet to be discovered. We start by defining two \textbf{isometry groups} associated to a stabilizer code $\cC\subseteq R^{2n}$:
\begin{equation}
\begin{split}
   \Mon_{\text{SL}}(\cC)&:=\{f\in \Aut(\cC)\mid f\text{ is the restriction of an $\SL_2(R)$-monomial map} \},\\
   \Symp(\cC)&:=\{f\in \Aut(\cC)\mid f\text{ is a symplectic isometry}\}.
   \end{split}
   \end{equation}
Theorem \ref{T-IsoR2n} implies $\Mon_{\text{SL}}(\cC) \subseteq \Symp(\cC)$. In fact this containment is strict, as the following example shows. See also Example \ref{E-Ex2}.
%%%%%%%%%%%%%%%%%%%%%%%%%%%%%%%%%%%%%%%%%%%%%%%%%%%%%
\begin{exa}[\mbox{\cite[Ex. 7.3]{me2}}]\label{Ex-NonEx1}
Consider the stabilizer code $\cC:=\gamma^{-1}(C) \subseteq \F_2^8$, where $C\subseteq (\F_2^2)^4$ is the $\F_2$-linear code generated by either of matrices
\[
N_1 = \left(\!\!
	\begin{array}{cc|cc|cc|cc}
    1 & 0 & 0 & 1 & 1 & 0 & 1 & 0 \\
    0 & 1 & 1 & 0 & 0 & 0 & 1 & 0 \\
    0 & 1 & 0 & 0 & 0 & 1 & 0 & 0 \\
    0 & 1 & 0 & 1 & 0 & 0 & 0 & 1
	\end{array}\!\!\right), \quad
N_2 = \left(\!\!\begin{array}{cc|cc|cc|cc}
    1 & 1 & 1 & 0 & 1 & 1 & 0 & 1 \\
    0 & 1 & 0 & 1 & 0 & 0 & 0 & 1 \\
    0 & 0 & 1 & 1 & 0 & 0 & 1 & 1 \\
    0 & 0 & 0 & 1 & 0 & 1 & 0 & 1
    \end{array}\!\!\right),
\]
and the map $\tilde{f}:C\longrightarrow C$ that sends the $i$-th row of~$N_1$ to the $i$-th row of~$N_2$. One checks straightforwardly that $\tilde{f}$ is a symplectic isometry. Moreover, $\tilde{f}$ cannot be a $\SL_2(R)$-monomial map due to the fact that there are $2\times 2$ zero blocks in $N_2$ whereas no zero blocks in $N_1$.
\end{exa}
%%%%%%%%%%%%%%%%%%%%%%%%%%%%%%%%
Since $\Mon_{\text{SL}}(\cC) \subsetneq \Symp(\cC)$, it is natural to ask how different the two groups can be. This type of question was first exploited by Wood \cite{Wood16} for classical linear codes with respect to the Hamming weight. In fact Wood showed that the difference can be as big as possible. In what follows we show that a similar scenario is true for stabilizer codes. To do so we need some preparation.

Let $\cC\subseteq \F_q^{2n}$ be a stabilizer code. Assume $\dim_{\F_q}\cC = k$ and let $G$ be a generator matrix of $\cC$, that is, $G$ is a full rank $k\times 2n$ matrix and 
\begin{equation}
\cC = \{xG\mid x\in\F_q^k\} = \im G = (\F_q^k)G.
\end{equation}
We wiew $G$ as the linear map $\F_q^k\longrightarrow \F_q^{2n}, \,x\longmapsto xG$ with inputs on the left\footnote{To avoid ambiguities, for the remainder of the section all inputs will be on the left and we precompose.}. This allows us to think of $\cC$ as an embedding of $\F_q^k$ in $\F_q^{2n}$ via $G$. That is, we identify $\cC$ with the pair $(\F_q^k, G)$. In this way, if $xG\longmapsto yG$ is an automorphism of $\cC$ then so is $xG\longmapsto yBG$ for any $B\in \GL_k(\F_q)$. In fact every isomorphism of $\cC$ is of this form. This implies 
\begin{equation}\label{e-AutC}
\Aut(\cC)=\{BG\mid B\in \GL_k(\F_q)\}.
\end{equation}
Moreover, \eqref{e-AutC} yields an isomorphism of groups
\begin{equation}\label{e-Phi}
\Phi : \Aut(\cC) \longrightarrow \GL_k(\F_q), \ f\longmapsto B_f
\end{equation}
where $B_f$ is the unique invertible matrix that satisfies $f = B_fG$. This allows us to identify $\Symp(\cC)$ with $\Phi(\Symp(\cC))\leq \GL_k(\F_q)$.
%because for any $f\in \Aut(\cC)$ there exists a unique matrix $B\in \GL_k(\F_q)$ such that the following diagram commutes.
%\begin{equation}\label{e-Diagram2}
%\begin{array}{l}
%   \begin{xy}
%   (0,0)*+{\F_q^k}="a"; (-40,20)*+{\F_q^k}="b";%
%   (0,20)*+{\cC}="c"; (-20,20)*+{\cC}="d";%
%   {\ar@{-->} "a";"b"}?*!/_3mm/{\exists !B};
%   {\ar "d";"c"}?*!/_3mm/{f};%
%   {\ar "a";"c"}?*!/_-3mm/{G};
%   {\ar "b";"d"}?*!/_3mm/{G};
%   \end{xy}
%\end{array}
%\end{equation}
An automorphism of a stabilizer code trivially preserves $\inners{\sbt}{\sbt}$. With the above identification we have 
\begin{equation}\label{e-Symp1}
\Symp(\cC) = \{B\in \GL_k(\F_q) \mid \swt(xBG) = \swt(xG) \text{ for all } x \in \F_q^k\}.
\end{equation}
Next, we address the group $\Mon_{\text{SL}}(\cC)$. Let $f \in \Mon_{\text{SL}}(\cC)$. As before, there exists a unique $B_f\in \GL_k(\F_q)$ such that $f = B_fG$. On the other hand, $f$ is the restriction of a monomial map $M$. Thus we have $B_fG = f = M_{|\cC}$.
%\begin{equation}\label{e-Diagram3}
%\begin{array}{l}
%   \begin{xy}
%   (0,0)*+{\F_q^k}="a"; (-40,20)*+{\F_q^k}="b";%
%   (0,20)*+{\cC}="c"; (-20,20)*+{\cC}="d";%
%   {\ar@{-->} "a";"b"}?*!/_3mm/{\exists !B_M};
%   {\ar "d";"c"}?*!/_3mm/{M};%
%   {\ar "a";"c"}?*!/_-3mm/{G};
%   {\ar "b";"d"}?*!/_3mm/{G};
%\end{array}
%\end{equation}
Denote\footnote{We use the same notation as Wood \cite{Wood16} where the extra ``r'' stand for ``restriction'' since we may identify $B_f$ with $M_{|\cC}$.} by
\begin{equation}
\rMon_{\text{SL}}(\cC):= \Phi(\Mon_{\rm{SL}}(\cC))\leq \GL_k(\F_q).
\end{equation}
Thus, in $\GL_k(\F_q)$ we have two subgroups that we can compare: $\rMon_{\text{SL}}(\cC)$ and $\Symp(\cC)$. Of course we have $\rMon_{\text{SL}}(\cC) \leq \Symp(\cC)$. We will show that given $H_1 \leq H_2 \leq \GL_k(\F_q)$ that satisfy some necessary conditions\footnote{Not all subgroups of $\GL_k(\F_q)$ can be isometry groups.}, there exists a stabilizer code $\cC$ such that $\rMon_{\text{SL}}(\cC)\subseteq H _1$ and $H_2 = \Symp(\cC)$, with equality $\rMon_{\text{SL}}(\cC) = H _1$ when $q=2$. We discuss first the necessary conditions following the line \cite{Wood16}. First we need the notion of closure from group theory. For more details we refer the reader to \cite{Wielandt1964} and \cite[Sec. 4]{Wood16}.
%%%%%%%%%%%%%%%%%%%%%%%%%%%%%%%%%%%%%%%%%%%
\begin{defi}
Let a group $\cG$ act on a set $X$ from the left and let $H\leq \cG$ be a subgroup. For $x\in X$, define $\text{orb}_H(x) := \{hx \mid h\in H\}$. Then the \textbf{closure} of $H$ with respect to the action of $\cG$ on $X$ is \begin{equation}
\ov{H} = \{g\in \cG\mid g\cdot\text{orb}_H(x) = \text{orb}_H(x) \text{ for all } x \in X\}.
\end{equation}
The subgroup $H$ is called \textbf{closed} if $H = \ov{H}$.
\end{defi}
%%%%%%%%%%%%%%%%%%%%%%%%%%%%%%%%%%%%%%%%%%%%
We fix the following notation for the remainder of this section.
\begin{notation}
Recall the change of coordinates $\gamma$ from \eqref{e-gamma}. Let $\cC \subseteq \F_q^{2n}$ be a stabilizer code and put $C:=\gamma(\cC) \subseteq (\F_q^2)^n$. For a generating matrix $G$ of $\cC$ we also put $N = \gamma(G)$, where the latter means that we permute the columns of $G$ accordingly. Clearly $\GL_2(\F_q)$ acts from the right on the matrix space $\cM_{k\times 2}(\F_q)$ and $\F_q^*$ acts from the left on $\F_q^k$. Denote $\cO^\#$ and $\cO$ the respective orbit spaces. The group $\GL_k(\F_q)$ acts on $\cO^\#$ from the left and on $\cO$ from the right in an obvious way. 
\end{notation}
%%%%%%%%%%%%%%%%%%%%%%%%%%%%%%%%%%%%%%%%%%%%%%%%
\begin{rem}\label{R-rem1}
Let $C \subseteq (\F_q^2)^n$ be an $\F_q$-linear code with generating matrix $H$. In this case we think of $H$ as $k\times n$ matrix whose columns are $k\times 2$ matrices. Similarly as in \eqref{e-Symp1} we may define the isometry group of $C$ as 
\begin{equation}\label{e-Symp2}
\Iso(C) := \{B\in \GL_k(\F_q) \mid \hwt(xBH) = \hwt(xH) \text{ for all } x \in \F_q^k\}.
\end{equation}
Next, let $\Mon(C):=\{f\in \Aut(C)\mid f \text{ is the restriction of a monomial map}\}$. We define $\rMon(C):=\Phi(\Mon(C))\leq \GL_k(\F_q)$. If $C$ is self-orthogonal we naturally put
\begin{equation}
\Mon_{\text{SL}}(C):=\{\widetilde{f}=\gamma\circ f \circ \gamma^{-1}\mid f\in \Mon_{\text{SL}}(\cC)\}\subseteq \Mon_{\rm{SL}}((\F_q^2)^n),
\end{equation}
where $\cC:=\gamma^{-1}(C)$. Then $\Mon_{\text{SL}}(C) \subseteq \Mon(C)$. Put $\rMon_{\text{SL}}(C):=\Phi(\Mon_{\text{SL}}(C)).$ It follows that $\rMon_{\text{SL}}(\cC) = \rMon_{\text{SL}}(C)$.
\end{rem}
%%%%%%%%%%%%%%%%%%%%%%%%%%%%%%%%%%%%
\begin{rem}\label{R-rem2}
Let $C\subseteq (\F_q^2)^n = \im H$ be a self-orthogonal $\F_q$-linear code and put $\cC:=\gamma^{-1}(C) = \im G$. Then $\swt(xG) = \hwt(xH)$ for all $x\in \F_q^k$. Comparing \eqref{e-Symp1} and \eqref{e-Symp2} we conclude that $\Iso(C) = \Symp(\cC)$. In addition, Remark \ref{R-rem1} implies $\rMon_{\text{SL}}(\cC)=\rMon_{\text{SL}}(C)\subseteq \rMon(C)$. When $q =2$ we have $\GL_2(\F_2) = \SL_2(\F_2)$ and thus $\rMon_{\text{SL}}(\cC) = \rMon(C)$.
\end{rem}
%%%%%%%%%%%%%%%%%%%%%%%%%%%%%%%%%%%%%%%%%%%
Remarks \ref{R-rem1} and \ref{R-rem2} point out the importance of the isomorphism $\Phi$ from \eqref{e-Phi}. By considering the images under $\Phi$ of all the groups floating around we obtain a unified approach that is independent of the change of coordinates $\gamma$.
\begin{exa}\label{E-Ex2}
Let $\cC\subseteq \F_2^{2\cdot 5}$ be the stabilizer given by the following generating matrix
\[
G = \left( \!\!\begin{array}{ccccc|ccccc} 0&1&1&1&1&0&0&0&0&0\\ 1&0&1&0&0&0&0&0&1&1\\ 1&0&0&0&1&0&1&1&0&0\end{array}\!\!\right).
\]
Using \eqref{e-Symp1} one computes $\Symp(\cC) = \GL_3(\F_2)$. On the other hand, only 8 of these symplectic isometries are restrictions of $\SL_2(\F_2)$-monomial maps.
\end{exa}
%%%%%%%%%%%%%%%%%%%%%%%%%%%%%%%%%%%%%%%%%%%%%%
Then, \cite[Prop. 4.7]{Wood16} applied to our specific scenario reduces to the following.
%%%%%%%%%%%%%%%%%%%%%%%%%%%%%%%%%%%%%%%%%%%%%%%%%%%
\begin{prop}
Let $C\subseteq (\F_q^2)^n$ be a $\F_q$-linear self-orthogonal code of dimension $k$. Then $\rMon(C)$ is closed with respect to the action of $\GL_k(\F_q)$ on $\cO^\#$ and $\Iso(C)$ is closed with respect to the action of $\GL_k(\F_q)$ on $\cO$.
\end{prop}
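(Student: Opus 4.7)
The plan is to translate our setting into the framework of \cite[Sec.~4]{Wood16} and invoke \cite[Prop.~4.7]{Wood16}. The translation rests on one observation: the alphabet of $C\subseteq(\F_q^2)^n$ is $\F_q^2$, so the $i$-th block of the codeword $xN\in(\F_q^2)^n$ is $xN_i$, where $N_i\in\cM_{k\times 2}(\F_q)$ is the $i$-th column of $N$ viewed as a $k\times 2$ matrix. I would begin by making this column-block viewpoint explicit.

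Next, for $\rMon(C)$ and the action on $\cO^\#$, I would unwind the definition of a monomial map. An element of $\Mon((\F_q^2)^n)$ is determined by a permutation $P$ of the $n$ blocks together with matrices $A_1,\ldots,A_n\in\GL_2(\F_q)$. If such a map restricts to an automorphism of $C$ corresponding via $\Phi$ to $B\in\GL_k(\F_q)$, then the action of $B$ on the columns of $N$ coincides, up to the block permutation $P$, with right multiplication of each column by some $A_i$. Passing to the orbit space $\cO^\#=\cM_{k\times 2}(\F_q)/\GL_2(\F_q)$, this says exactly that $B$ permutes the multiset $\{[N_1],\ldots,[N_n]\}\subseteq \cO^\#$. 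Hence $\rMon(C)$ is the stabilizer in $\GL_k(\F_q)$ of this multiset of orbits, and such stabilizers are closed with respect to the $\GL_k(\F_q)$-action on $\cO^\#$.

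For $\Iso(C)$ and the action on $\cO$, I would use that $\hwt(xN)=|\{i : xN_i\neq 0\}|$ and that vanishing of $xN_i$ depends only on the $\F_q^*$-orbit $[x]\in\cO$, since scaling $x$ scales $xN_i$. Therefore the Hamming weight factors through $\cO$, and $B\in\GL_k(\F_q)$ lies in $\Iso(C)$ iff for every orbit $[x]\in\cO$ and every index $i$, vanishing of $xN_i$ is equivalent to vanishing of $(xB)N_i$; equivalently, $B$ stabilizes the partition of $\cO$ induced by the columns of $N$. This stabilizer-of-partition description is exactly the hypothesis of \cite[Prop.~4.7]{Wood16}, which immediately yields closedness.

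The hard part will be matching our conventions---left versus right actions, the swap between $\F_q^{2n}$ and $(\F_q^2)^n$ via $\gamma$, and the fact that in this proposition we work with the full $\GL_2$-monomial group $\rMon(C)$ rather than the $\SL_2$-restricted variant $\rMon_{\text{SL}}(\cC)$---to Wood's more general framework of additive codes with partitioned alphabets. Once this dictionary is in place, the proposition reduces to a direct application of \cite[Prop.~4.7]{Wood16} without further computation.
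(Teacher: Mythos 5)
Your overall strategy coincides with the paper's: the proposition is stated there as a direct specialization of \cite[Prop.~4.7]{Wood16} with no further argument, and your dictionary --- the columns of $N$ as elements of $\cM_{k\times 2}(\F_q)$, $\rMon(C)$ as the stabilizer in $\GL_k(\F_q)$ of the multiset of orbits $\{[N_1],\ldots,[N_n]\}\subseteq\cO^\#$, and the general fact that the stabilizer of a function on the orbit space is a closed subgroup (if $g\cdot\mathrm{orb}_H(x)=\mathrm{orb}_H(x)$ for all $x$, then $gx=hx$ for some $h\in H$, so the function takes the same value at $gx$ as at $x$) --- is the right way to make that specialization explicit. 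The half of your argument concerning $\rMon(C)$ is correct.

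The half concerning $\Iso(C)$ contains a genuine error. You assert that $B\in\Iso(C)$ iff for every $[x]\in\cO$ and every index $i$, the vanishing of $xN_i$ is equivalent to the vanishing of $(xB)N_i$. That condition forces $B$ to preserve the support of each codeword \emph{blockwise}, whereas membership in $\Iso(C)$ only requires that the \emph{number} of nonvanishing blocks be preserved, i.e.\ $|\{i: (xB)N_i\neq 0\}|=|\{i: xN_i\neq 0\}|$. Your condition cuts out a proper subgroup of $\Iso(C)$ in general --- already a $B$ induced by a monomial map with nontrivial block permutation $\sigma$ satisfies $(xB)N_i=0\iff xN_{\sigma(i)}=0$ and hence fails it --- and proving that this smaller subgroup is closed does not prove that $\Iso(C)$ is. The repair is immediate and stays inside your framework: since $\hwt(xN)$ depends only on the class $[x]\in\cO$, the weight descends to a function $\ov{w}:\cO\to\N$, and $\Iso(C)$ is precisely the stabilizer of $\ov{w}$ under the right action of $\GL_k(\F_q)$ on $\cO$; the same stabilizer-of-a-function argument then yields closedness verbatim.
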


%%%%%%%%%%%%%%%%%%%%%%%%%%%%%%%%%%%%%%%%%%
\begin{theo}[\mbox{\cite[Thm. 5.1]{Wood16}}]\label{T-WoodMain}
Let $H_1,\, H_2\leq \GL_k(\F_q)$ be two subgroups such that $H_1$ is closed under the action of $\GL_k(\F_q)$ on $\cO^\#$ and $H_2$ is closed under the action of $\GL_k(\F_q)$ on $\cO$. Then there exists $n \in \N$ and a $\F_q$-linear code $C\subseteq (\F_q^2)^n$ such that 
\[
H_1 = \rMon(C) \text{ and } H_2 = \Iso(C).
\]
\end{theo}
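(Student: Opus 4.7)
The plan is to build $C$ by specifying a multiset of $k\times 2$ block-columns, chosen so that its combined stabilizers under the two $\GL_k(\F_q)$-actions realize $H_1$ and $H_2$ respectively. Writing a generator matrix $N$ of a prospective code $C$ as a sequence of columns $h_1,\ldots,h_n \in \cM_{k\times 2}(\F_q)$, a matrix $B \in \GL_k(\F_q)$ lies in $\rMon(C)$ iff the left action of $B$ permutes the multiset $\{[h_1],\ldots,[h_n]\} \subseteq \cO^\#$, and it lies in $\Iso(C)$ iff $B$ preserves the multiset of Hamming-vanishing conditions imposed by the columns on $\F_q^k$, a condition controlled by the action of $\GL_k(\F_q)$ on $\cO$. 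Thus the problem reduces to selecting a multiset whose stabilizer in the first action is $H_1$ and whose induced stabilizer in the second action is $H_2$.

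First I would realize $H_1$. Enumerate the $H_1$-orbits on $\cO^\#$ and, for each orbit, include all of its elements as columns in $N$ with a common multiplicity. By construction $H_1 \subseteq \rMon(C)$. The closure hypothesis $H_1 = \ov{H_1}$ under the $\cO^\#$-action implies that any $B \in \rMon(C)$ must send every $H_1$-orbit in $\cO^\#$ to itself, hence belongs to $H_1$, giving $\rMon(C) = H_1$.

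Next I would realize $H_2$. The Hamming weight of $xN$ counts the indices $i$ with $xh_i \neq 0$, information that depends only on the left-annihilator data packaged by $\cO$; consequently the $H_2$-closure on $\cO$ governs $\Iso(C)$. Holding fixed the columns already chosen for the $H_1$-realization, I would adjoin supplementary blocks that leave the $\cO^\#$-multiset $H_1$-equivariant (e.g.\ by adding full $H_1$-orbits, whose addition does not shrink $\rMon(C)$ beyond $H_1$) while refining the induced multiset on $\cO$ to be exactly $H_2$-stable. Since $H_1 \leq H_2$ each $H_1$-orbit on $\cO^\#$ projects onto a union of $H_2$-orbits on $\cO$, making this refinement well-posed. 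Then $H_2 \subseteq \Iso(C)$, and closure of $H_2$ on $\cO$ forces $\Iso(C) \subseteq H_2$, hence equality.

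The main obstacle is the simultaneity: multiplicities must be arranged so that the $\cO^\#$-multiset detects no symmetry beyond $H_1$ while the induced $\cO$-multiset detects no symmetry beyond $H_2$. This amounts to a linear-algebraic independence check among the vectors that record how each $H_1$-orbit on $\cO^\#$ decomposes into $H_2$-orbits on $\cO$, together with the observation that the projection from $\cO^\#$ to (unordered pairs in) $\cO$ is $\GL_k(\F_q)$-equivariant. Verifying that a suitable integer weighting exists, and extracting the smallest $n$ realizing it, is the technical heart of the argument, essentially identical to the field-alphabet case treated in \cite[Thm. 5.1]{Wood16} after replacing $\F_q$-columns by $k\times 2$ blocks and $\F_q^*$ by $\GL_2(\F_q)$.
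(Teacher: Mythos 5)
First, a point of reference: the paper does not prove this statement at all --- it is imported verbatim from \cite[Thm.~5.1]{Wood16} and used as a black box, so there is no internal proof to compare against. Your proposal is therefore a reconstruction attempt of Wood's argument. It does capture the correct framework: $\rMon(C)$ is the stabilizer of the multiplicity function $\eta:\cO^\#\to\N$ determined by the block-columns of a generator matrix, $\Iso(C)$ is the stabilizer of the induced weight function on $\cO$, and the closure hypotheses are exactly what makes subgroups realizable as such stabilizers.

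However, the decisive step is missing, and where you do commit to details they do not yet work. (i) Taking each $H_1$-orbit of $\cO^\#$ ``with a common multiplicity'' does not force $\rMon(C)=H_1$: a matrix preserving the resulting multiset need only permute distinct orbits of equal multiplicity, so the multiplicities must first be chosen to separate the $H_1$-orbits before the closure hypothesis can be invoked. (ii) More seriously, your second stage conflates the two orbit spaces. There is no projection sending an $H_1$-orbit of $\cO^\#$ to ``a union of $H_2$-orbits on $\cO$''; what exists is the $\GL_k(\F_q)$-equivariant assignment $h\mapsto \{x\in\F_q^k : xh=0\}$, and the induced linear map from multiplicity functions on $\cO^\#$ to weight functions on $\cO$ is neither injective nor surjective. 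The entire difficulty of the theorem is to exhibit one $\eta$ whose stabilizer on $\cO^\#$ is exactly $H_1$ \emph{and} whose image under this map has stabilizer exactly $H_2$ on $\cO$: an $H_1$-invariant $\eta$ need not even yield an $H_2$-invariant weight function, so ``adjoining full $H_1$-orbits'' cannot by itself push $\Iso(C)$ up to $H_2$, and conversely nothing you add guarantees it does not overshoot $H_2$. You defer precisely this step to ``a linear-algebraic independence check \ldots essentially identical to \cite[Thm.~5.1]{Wood16}'', which is circular, since that is the theorem being proved. (You also invoke $H_1\le H_2$, which is indeed necessary --- $\rMon(C)\le\Iso(C)$ always --- but is not among the stated hypotheses.) To close the gap one needs Wood's actual construction: roughly, encode an $H_2$-separating invariant function on $\cO$ into the rank-one classes of $\cO^\#$ (these correspond $\GL_k(\F_q)$-equivariantly to the nonzero part of $\cO$), superpose an $H_1$-separating invariant function on the remaining classes, and verify the two stabilizer computations independently.
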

%%%%%%%%%%%%%%%%%%%%%%%%%%%%%%
Of course there is no reason for the linear code produced in Theorem \ref{T-WoodMain} to be self-orthogonal. However, we make use of it to produce a self-orthogonal code of the same dimension without changing the isometry groups. To achieve this we make use of the concatenated code. That is, for a linear code $C$, the \textbf{concatenated code} is defined as\begin{equation}
C\mid C :=\{(x\mid x) \mid x\in C\}\subseteq (\F_q^2)^{2n}.
\end{equation}
Clearly, $C\mid C$ has the same dimension as $C$, but it is twice as long. In this sense $C$ has a rate twice as large as the rate of $C\mid C$. So of course, achieving self-orthogonality will come with a high cost. 
%%%%%%%%%%%%%%%%%%%%%%%%%%%%%%%%%%%%%%%%%%%%%%%%%%%%%
\begin{lem}
Let $C = \im N\subseteq (\F_q^2)^{n}$ be a $\F_q$-linear code. Then $\rMon(C\mid C) = \rMon(C)$ and $\Iso(C\mid C) = \Iso(C).$
\end{lem}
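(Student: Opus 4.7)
The plan is to treat the two equalities separately. For the isometry groups, I would observe that for any $B\in\GL_k(\F_q)$ and $x\in\F_q^k$ we have $xB(N\mid N)=(xBN\mid xBN)$, so $\hwt(xB(N\mid N))=2\hwt(xBN)$, and similarly with $B=I$. Plugging these into the defining condition \eqref{e-Symp2} shows that the weight-preservation requirement on $C\mid C$ coincides with the one on $C$, giving $\Iso(C\mid C)=\Iso(C)$. The easy inclusion $\rMon(C)\subseteq\rMon(C\mid C)$ is equally immediate: given a monomial $M$ on $(\F_q^2)^n$ with $BN=NM$, the block-diagonal lift $\diag(M,M)$ is a monomial on $(\F_q^2)^{2n}$ witnessing $B\in\rMon(C\mid C)$.

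The heart of the lemma is the reverse containment $\rMon(C\mid C)\subseteq\rMon(C)$. Suppose $B(N\mid N)=(N\mid N)M'$ with $M'$ a monomial on $(\F_q^2)^{2n}$, encoded by a permutation $\pi\in S_{2n}$ together with blocks $A_i\in\GL_2(\F_q)$ at positions $(i,\pi(i))$. Setting $\langle k\rangle:=((k-1)\bmod n)+1$ so that the $k$-th column block of $(N\mid N)$ is $N_{\langle k\rangle}$, and comparing $j$-th column blocks on both sides, I would extract the relations
\[
N_{\langle i\rangle}A_i=BN_{\langle\pi(i)\rangle},\qquad i=1,\ldots,2n.
\]
These $2n$ equations carry redundancy; the remaining task is to select $n$ of them that assemble into an honest monomial map on $(\F_q^2)^n$ carrying $N$ to $BN$.

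The final step, and in my view the only nontrivial point, is a marriage-theorem argument. I would form the bipartite multigraph $\Gamma$ with left and right vertex sets both equal to $\{1,\ldots,n\}$, adding for each $i\in\{1,\ldots,2n\}$ an edge from $\langle i\rangle$ to $\langle\pi(i)\rangle$ labeled by $A_i$. Because $\pi$ is a bijection of $\{1,\ldots,2n\}$ and the reduction $\langle\cdot\rangle$ is $2$-to-$1$, every vertex on both sides has degree exactly $2$, so $\Gamma$ is $2$-regular bipartite and admits a perfect matching by Hall's marriage theorem. Such a matching supplies a permutation $\rho\in S_n$ and representatives $i(j)\in\{1,\ldots,2n\}$ with $\langle i(j)\rangle=j$ and $\langle\pi(i(j))\rangle=\rho(j)$. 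Setting $A''_j:=A_{i(j)}\in\GL_2(\F_q)$, the selected relations read $N_j A''_j = BN_{\rho(j)}$, and the monomial $M$ on $(\F_q^2)^n$ with block $A''_j$ at position $(j,\rho(j))$ satisfies $NM=BN$, yielding $B\in\rMon(C)$.
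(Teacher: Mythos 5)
Your proof is correct, and for the second equality ($\Iso(C\mid C)=\Iso(C)$) it is essentially identical to the paper's: both reduce to the observation that $\hwt(xB(N\mid N))=2\,\hwt(xBN)$, so the weight-preservation conditions for $C\mid C$ and $C$ coincide. For the first equality the routes diverge. The paper simply invokes \cite[Prop.~3.7]{Wood16} — a general result to the effect that $\rMon$ depends only on the code's column multiplicity function up to positive scaling, which concatenation merely doubles — together with the observation that $C\mid C=\im(N\mid N)$. You instead give a self-contained combinatorial proof: the easy inclusion via the block-diagonal lift $\diag(M,M)$, and the reverse inclusion by extracting, from the $2n$ block relations $N_{\langle i\rangle}A_i=BN_{\langle\pi(i)\rangle}$, a consistent subsystem indexed by a perfect matching in the $2$-regular bipartite multigraph on $\{1,\dots,n\}\sqcup\{1,\dots,n\}$ (which exists by Hall's condition, since $2|S|$ edges leave any left set $S$ and each right vertex absorbs at most $2$). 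This argument is sound — the matching yields a permutation $\rho$ and invertible blocks $A''_j$ with $N_jA''_j=BN_{\rho(j)}$, hence a genuine monomial witness on $(\F_q^2)^n$. What your approach buys is transparency and independence from Wood's machinery, at the cost of length; what the paper's citation buys is brevity and the immediate generalization to $p$-fold concatenation used in Corollary~\ref{C-cor}, which your marriage argument would also handle (a $p$-regular bipartite multigraph likewise has a perfect matching) but would need to be restated for.
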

\begin{proof}
The first statement is a corollary of \cite[Prop. 3.7]{Wood16} along with the observation that 
%Recall that $\rMon(C) = \Phi(\Mon(C))$. Therefore, the first task is to establish $\Mon(C\mid C)$. We compute
%\begin{align*}
%\Mon(C\mid C) &= \{A\in \GL_{4n}(\F_q) \mid (C\mid C)A  = (C\mid C) \text{ and } A \text{ monomial}\} \\
%& = \left\{\widetilde{M}:=\left(\!\!\!\begin{array}{cc} M&0\\0&M\end{array}\!\!\!\right)\,\middle|\, M\in \Mon(C)\right\}.
%\end{align*}
$C\mid C = \im \widehat{N}$ where $\widehat{N}:= N\mid N$ is the corresponding concatenated matrix.
%Assume $C$ has dimension $k$ over $\F_q$. Then for all $B\in \GL_k(\F_q)$ we have 
%\begin{equation}
%BN = BM_{|C} \iff B\widehat{N} = (BN\mid BN) =  (BM_{|C}\mid BM_{|C}) = B\widetilde{M}_{|(C\mid C)},
%\end{equation}
%where the last equality follows by following obvious equation
%\begin{equation}
%M_{|C}\mid M_{|C} = \widetilde{M}_{|(C\mid C)}.
%\end{equation} 
%By the very definition of $\Phi$, and following \eqref{e-Diagram3} (recall that we precompose the maps and inputs are on the left) we obtain $\rMon(C\mid C) = \rMon(C)$. 
Next, by the very definition of the Hamming weight, for all $B\in \GL_k(\F_q)$ we have
\[
\hwt(xN\mid xN) = \hwt(xBN\mid xBN) \iff \hwt(xN) = \hwt(xBN).
\]
The second statement then follows.
\end{proof}
%%%%%%%%%%%%%%%%%%%%%%%%%%%%%%%%%%%%%%%%%%%%%%%
%%%%%%%%%%%%%%%%%%%%%%%%%%%%%%%%%%%%%%%%%%%%%%%
\begin{lem}
Fix $q= 2^l$. Let $C \subseteq (\F_q^2)^{n}$ be a $\F_q$-linear code. Then $C\mid C \subseteq (\F_q^2)^{2n}$ is a self-orthogonal $\F_q$-linear code. 
\end{lem}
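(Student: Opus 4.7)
The plan is to exploit the characteristic-two hypothesis directly: doubling any vector kills its pull-back symplectic inner product with every other doubled vector, regardless of whether $C$ itself was self-orthogonal.

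More concretely, pick two arbitrary codewords $(x\mid x),\,(y\mid y)\in C\mid C$, with $x=(x_1,\ldots,x_n)$ and $y=(y_1,\ldots,y_n)$ in $(\F_q^2)^n$. Using the formula \eqref{e-F} for the pulled-back symplectic form on $(\F_q^2)^{2n}$, the sum splits into two identical halves:
\[
\inner{(x\mid x)}{(y\mid y)}=\sum_{i=1}^n x_iJy_i\T+\sum_{i=1}^n x_iJy_i\T=2\sum_{i=1}^n x_iJy_i\T.
\]
Since $q=2^l$, we have $\Char(\F_q)=2$, so the coefficient $2$ vanishes and the right-hand side is $0$. Thus every pair of codewords in $C\mid C$ is orthogonal, which by definition means $C\mid C\subseteq (C\mid C)^\perp$, i.e.\ $C\mid C$ is self-orthogonal.

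The proof is a one-line computation, so there is no real obstacle; the only thing worth flagging is that self-orthogonality of $C$ itself is never invoked, which is consistent with the fact that the preceding construction (via Theorem \ref{T-WoodMain}) produces a code with prescribed isometry groups but no orthogonality guarantee. The concatenation trick is precisely what converts an arbitrary $\F_q$-linear code in characteristic $2$ into a self-orthogonal one without altering the isometry groups, as guaranteed by the previous lemma.
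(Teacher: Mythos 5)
Your proof is correct and is essentially identical to the paper's: both compute $\inner{(x\mid x)}{(y\mid y)}$ as the sum of two equal copies of $\sum_{i=1}^n x_iJy_i\T$ and conclude it vanishes because $\Char(\F_q)=2$. Your added remark that self-orthogonality of $C$ is never needed is accurate and consistent with how the lemma is used.
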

\begin{proof}
Let $x = (x_1,\ldots,x_n), y=(y_1,\ldots, y_n) \in (\F_q^2)^n$. Then 
\[
\inner{(x\mid x)}{(y\mid y)} = \sum_{i=1}^n x_iJy_i\T + \sum_{i=1}^n x_iJy_i\T = 0,
\]
since $\Char(\F_q) = 2$. Thus $C\mid C$ is self-orthogonal. 
\end{proof}
%%%%%%%%%%%%%%%%%%%%%%%%%%%%%%%%%%%%%
\begin{cor}\label{C-cor}
Let $C \subseteq (\F_q^2)^n$ be a $\F_q$-linear code where $q = p^l$ for some prime $p$. Then the $p$-th concatenated code $\widetilde{C}:=C\mid\cdots\mid C\subseteq (\F_q^2)^{pn}$ is self-orthogonal code such that $\rMon(C) = \rMon(\widetilde{C})$ and $\Iso(C) = \Iso(\widetilde{C})$.
\end{cor}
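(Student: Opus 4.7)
The plan is to emulate the two preceding lemmas verbatim with $p$ copies in place of $2$ copies. Writing $N$ for a generating matrix of $C$, the $p$-fold concatenated code $\widetilde{C}$ equals $\im\widehat{N}$, where $\widehat{N}=N\mid N\mid\cdots\mid N$ (with $p$ blocks of $N$). I need to establish three facts: self-orthogonality of $\widetilde{C}$, the equality $\rMon(\widetilde{C})=\rMon(C)$, and the equality $\Iso(\widetilde{C})=\Iso(C)$.

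For self-orthogonality, I would directly compute, for any $x=(x_1,\ldots,x_n),\,y=(y_1,\ldots,y_n)\in C$,
\[
   \inner{(x\mid\cdots\mid x)}{(y\mid\cdots\mid y)}=\sum_{j=1}^p\sum_{i=1}^n x_iJy_i\T=p\cdot\inner{x}{y}=0,
\]
where the final equality uses $\Char(\F_q)=p$ since $q=p^l$. This is the exact analogue of the second lemma above, and notice that self-orthogonality of $C$ itself is not needed.

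For the two isometry group equalities, the proofs of the preceding lemma carry over with minimal changes. The equality $\rMon(\widetilde{C})=\rMon(C)$ follows from \cite[Prop.~3.7]{Wood16} applied to the generating matrix $\widehat{N}=N\mid\cdots\mid N$, exactly as in the $p=2$ case. For $\Iso$, observe that $p$-fold repetition multiplies the Hamming weight by $p$ as integers, i.e.
\[
   \hwt(xN\mid\cdots\mid xN)=p\cdot\hwt(xN)\quad\text{and}\quad \hwt(xBN\mid\cdots\mid xBN)=p\cdot\hwt(xBN),
\]
for every $x\in\F_q^k$ and $B\in\GL_k(\F_q)$. Since $p$ is a nonzero integer, we can cancel it in $\N$ to conclude that $\hwt(x\widehat{N})=\hwt(xB\widehat{N})$ for all $x$ iff $\hwt(xN)=\hwt(xBN)$ for all $x$, which by \eqref{e-Symp2} yields $\Iso(\widetilde{C})=\Iso(C)$.

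There is no substantial obstacle here; the corollary really is just the conjunction of the two preceding lemmas stretched from two copies to $p$ copies. The only point worth flagging is that the self-orthogonality computation is purely a characteristic-$p$ phenomenon, so the number of concatenated copies must equal (a multiple of) $p$; taking exactly $p$ copies is the minimal choice that simultaneously forces the inner product to vanish and keeps the arithmetic bookkeeping for $\Iso$ clean.
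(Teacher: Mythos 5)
Your proof is correct and is exactly the argument the paper intends: the corollary is stated without a separate proof precisely because it is the $p$-fold generalization of the two preceding lemmas, with self-orthogonality coming from $p\cdot\inner{x}{y}=0$ in characteristic $p$ and the isometry-group equalities from \cite[Prop.~3.7]{Wood16} together with the fact that $p$-fold repetition scales the Hamming weight by the nonzero integer $p$. Your observation that self-orthogonality of $C$ itself is never needed is also consistent with the paper, whose second lemma likewise assumes nothing about $C$.
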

%%%%%%%%%%%%%%%%%%%%%%%%%%%%%%%%

We are now ready to prove the main theorem.

\begin{theo}\label{T-main}
Let $H_1,\, H_2\leq \GL_k(\F_q)$ be two subgroups such that $H_1$ is closed under the action of $\GL_k(\F_q)$ on $\cO^\#$ and $H_2$ is closed under the action of $\GL_k(\F_q)$ on $\cO$. Then there exists $n \in \N$ and a stabilizer code $\cC\subseteq \F_q^{2n}$ such that 
\begin{equation}\label{e-main}
\rMon_{\text{SL}}(\cC)\subseteq H_1 \text{ and } H_2 = \Symp(\cC),
\end{equation}
with equality $H_1=\rMon_{\text{SL}}(\cC)$ if $q = 2$.
\end{theo}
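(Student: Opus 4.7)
The plan is to string together the three ingredients assembled above: Wood's existence theorem (Theorem \ref{T-WoodMain}), the concatenation lemma (Corollary \ref{C-cor}), and the dictionary in Remarks \ref{R-rem1}--\ref{R-rem2}. Wood gives us an $\F_q$-linear code realizing the prescribed subgroups as $\rMon$ and $\Iso$, but that code need not be self-orthogonal; concatenation fixes the self-orthogonality without disturbing the isometry groups; and pulling back through $\gamma^{-1}$ finally produces a stabilizer code in $\F_q^{2n}$ for which $\Symp$ and $\rMon_{\text{SL}}$ are the invariants we need to control.

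Concretely, I would first apply Theorem \ref{T-WoodMain} to the pair $(H_1,H_2)$ to obtain $m\in\N$ and an $\F_q$-linear code $C_0\subseteq(\F_q^2)^m$ of dimension $k$ with $\rMon(C_0)=H_1$ and $\Iso(C_0)=H_2$. Writing $q=p^l$, I would then pass to the $p$-fold concatenation $\widetilde{C_0}:=C_0\mid\cdots\mid C_0\subseteq(\F_q^2)^{pm}$. By Corollary \ref{C-cor} the code $\widetilde{C_0}$ is self-orthogonal and still satisfies $\rMon(\widetilde{C_0})=H_1$ and $\Iso(\widetilde{C_0})=H_2$, while remaining of the same dimension $k$. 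Setting $n:=pm$ and $\cC:=\gamma^{-1}(\widetilde{C_0})\subseteq\F_q^{2n}$, the identity $\inner{x}{y}=\inners{\gamma^{-1}(x)}{\gamma^{-1}(y)}$ from \eqref{e-F} transports self-orthogonality from $\widetilde{C_0}$ to $\cC$, so Theorem \ref{P-StabCodeSO} certifies that $\cC$ is a stabilizer code.

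It then remains to identify the isometry groups of $\cC$ with $H_1$ and $H_2$, which is precisely what Remark \ref{R-rem2} is designed to do. Since $\widetilde{C_0}$ is self-orthogonal, that remark yields $\Symp(\cC)=\Iso(\widetilde{C_0})=H_2$ and $\rMon_{\text{SL}}(\cC)=\rMon_{\text{SL}}(\widetilde{C_0})\subseteq\rMon(\widetilde{C_0})=H_1$, proving \eqref{e-main}. For the binary case, the second sentence of Remark \ref{R-rem2} upgrades the inclusion to an equality, because $\GL_2(\F_2)=\SL_2(\F_2)$ forces every monomial block $A_i$ to already sit in $\SL_2(\F_2)$, so $\rMon_{\text{SL}}(\widetilde{C_0})=\rMon(\widetilde{C_0})=H_1$.

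Given the preparation already in place, there is no genuinely hard step: the proof is essentially a bookkeeping exercise. The one thing I would be careful about is that concatenation does not change the dimension $k$, so the groups $\rMon_{\text{SL}}(\cC)$, $\Symp(\cC)$, $\rMon(\widetilde{C_0})$ and $\Iso(\widetilde{C_0})$ all sit inside the same $\GL_k(\F_q)$ under $\Phi$; otherwise the closure hypotheses on $H_1,H_2$ would not interact with the right object. The only qualitatively lossy step is the passage from $\rMon$ to $\rMon_{\text{SL}}$ in characteristic $\neq 2$, which is exactly why the theorem yields only the inclusion $\rMon_{\text{SL}}(\cC)\subseteq H_1$ outside the binary case.
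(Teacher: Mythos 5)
Your proposal is correct and follows exactly the paper's own argument: apply Theorem \ref{T-WoodMain}, pass to the $p$-fold concatenation via Corollary \ref{C-cor} to obtain self-orthogonality without changing the isometry groups, pull back through $\gamma^{-1}$, and invoke Remarks \ref{R-rem1}--\ref{R-rem2} to identify the groups, with the $q=2$ equality coming from $\GL_2(\F_2)=\SL_2(\F_2)$. You simply spell out the bookkeeping more explicitly than the paper does.
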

\begin{proof}
Applying Corollary \ref{C-cor} to Theorem \ref{T-WoodMain} we can produce a self-orthogonal code $C\subseteq (\F_q^2)^n$, for some $n$, such that 
\[
H_1 = \rMon(C) \text{ and } H_2 = \Iso(C).
\]
Now $\cC:=\gamma^{-1}(C)\subseteq \F_q^{2n}$ is a stabilizer code that satisfies \eqref{e-main}, thanks to Remark \ref{R-rem2}. The equality for the case $q=2$ was also discussed in Remarks \ref{R-rem1} and \ref{R-rem2}.
\end{proof}

We now address the general case of stabilizer codes over a local commutative Frobenius $R$. In this case we obtain a weaker version of Theorem \ref{T-main}. Let $\mm$ be the maximal ideal and $\alpha$ a generator of the socle. Recall from Remark \ref{R-LFR} that $\soc(R) = \alpha R \cong R/\mm=\F_q$.  

\begin{rem}\label{R-rem3}
Let $X\subseteq R^{2n}$ be a subset. We denote $\alpha X:=\{\alpha x \mid x \in X\}$ and $\ov{X}:=\{\ov{x}\mid x \in X\}\subseteq \F_q^{2n}$. Note that $\alpha X$ is trivially self-orthogonal. Thus, $\alpha X$ is a stabilizer code for any submodule $X\subseteq R^{2n}$. Recall the map $\rho$ from Remark \ref{R-LFR}. It induces a map, called again $\rho$, $\alpha R^{2n} \longrightarrow \F_q^{2n}$. Thus $\alpha X \cong \ov{X}$ for any submodule $X\subseteq R^{2n}$, where the isomorphism is $R$-linear and $\F_q$-linear.
\end{rem}
%%%%%%%%%%%%%%%%%%%%%%%%%%%%%%%%%%%%%%%%%%%%%%
%In this case we obtain a weaker version of Theorem \ref{T-main}.
\begin{theo}\label{T-weaker}
Let $H\leq \GL_k(\F_q)$ be a closed subgroup under the action of $\GL_k(\F_q)$ on $\cO$. Then there exists $n \in \N$ and a stabilizer code $\cC\subseteq R^{2n}$ such that $H = \Symp(\cC)$.
\end{theo}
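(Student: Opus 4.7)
The plan is to lift a suitable stabilizer code over $\F_q$ to one over $R$ via the isomorphism $\rho$ from Remark \ref{R-LFR}. Since submodules of $\alpha R^{2n}$ are trivially self-orthogonal (Remark \ref{R-rem3}) and correspond bijectively to $\F_q$-subspaces of $\F_q^{2n}$, this transfer should preserve all the structure needed to pin down $\Symp$.

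Concretely, I would first apply Theorem \ref{T-main} with $H_1 = \GL_k(\F_q)$ (trivially closed under any action) and $H_2 = H$, producing a stabilizer code $\cC' \subseteq \F_q^{2n}$ with $\Symp(\cC') = H$. Then set $\cC := \rho^{-1}(\cC') \subseteq \alpha R^{2n}$, where $\rho$ is applied coordinatewise. By Remark \ref{R-rem3}, $\cC$ is self-orthogonal, hence a stabilizer code by Theorem \ref{P-StabCodeSO}, and $\rho$ restricts to an isomorphism $\cC \to \cC'$ that is simultaneously $R$-linear and $\F_q$-linear.

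For the equality $\Symp(\cC) = H$: Since $\mm \cdot \cC = 0$, the code $\cC$ is in fact an $\F_q$-vector space of dimension $k$, and by Remark \ref{R-LFR} its $R$-linear and $\F_q$-linear automorphisms coincide. The conjugation $f \mapsto \rho \circ f \circ \rho^{-1}$ thus yields a group isomorphism $\Aut(\cC) \to \Aut(\cC')$ that is compatible with the identifications to $\GL_k(\F_q)$ via $\Phi$. Under this bijection, symplectic weight is preserved coordinatewise because $\alpha r = 0$ iff $\ov{r} = 0$, so the supports of $x \in \alpha R^{2n}$ and $\rho(x) \in \F_q^{2n}$ agree. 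Preservation of the symplectic inner product is automatic for both codes since they are self-orthogonal: if $f$ is an automorphism of a self-orthogonal code $\cD$, then $f(x), f(y) \in \cD$ forces $\inners{f(x)}{f(y)} = 0 = \inners{x}{y}$. Therefore $\Symp(\cC) = \Symp(\cC') = H$.

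The reason one obtains only this weaker statement, rather than a full analogue of Theorem \ref{T-main}, is that $\cC$ necessarily lives in $\alpha R^{2n}$, where the $R$-action collapses to an $\F_q$-action. This makes the prescription of $\rMon_{\text{SL}}(\cC)$ the main obstacle: an $\SL_2(R)$-monomial map of the ambient $R^{2n}$ reduces mod $\mm$ to an $\SL_2(\F_q)$-monomial map on $\cC'$, but lifting back requires that individual $\SL_2(\F_q)$ matrices assemble into a single $\SL_2(R)$-monomial transformation of $R^{2n}$, and there is no obvious obstruction-free mechanism ensuring this for an \emph{arbitrary} prescribed closed subgroup $H_1$. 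A more refined construction—perhaps one that does not force $\cC \subseteq \alpha R^{2n}$—would be needed to recover the first conclusion of Theorem \ref{T-main} in this generality.
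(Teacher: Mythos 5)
Your proposal follows essentially the same route as the paper: apply Theorem \ref{T-main} over $\F_q$ to get $\Symp(\cC')=H$, lift the code into $(\alpha R)^{2n}$ via $\rho^{-1}$, and use Remarks \ref{R-rem3} and \ref{R-LFR} to transfer self-orthogonality and to identify $R$-linear with $\F_q$-linear automorphisms, so that $\Symp(\cC)=\Symp(\cC')=H$. The only quibble is your choice $H_1=\GL_k(\F_q)$: since the conclusion of Theorem \ref{T-WoodMain} forces $\rMon(C)\leq\Iso(C)$, that theorem implicitly requires $H_1\leq H_2$, so you should instead take $H_1$ to be a closed subgroup contained in $H$ (the paper simply leaves $H_1$ unspecified, as only the $H_2$-half of the conclusion is used).
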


\begin{proof}
Let $\widetilde{\cC} \subseteq \F_q^{2n}$ be the stabilizer code produced by Theorem \ref{T-main} that satisfies $H = \Symp(\widetilde{\cC})$. Write $\widetilde{\cC} = \im G$, and let $\ov{g_i}$ be the $i$th row of $G$. Then $\cC:=\rho^{-1}(\widetilde{\cC}) \subseteq (\alpha R)^{2n} \subseteq R^{2n}$ is a stabilizer code over $R$ thanks to Remark \ref{R-rem3}. Let $G'$ be the matrix whose $i$th row is $\alpha g_i$. Thanks to \eqref{e-FqVS} we have 
\begin{equation}
\cC = \{xG'\mid \ov{x}\in \F_q^k\}=\im_{\!\F_q}G'.
\end{equation}
Furthermore, $\F_q$-linear automorphisms of $\cC$ and $R$-linear automorphisms coincide thanks to Remark \ref{R-LFR}. This implies $\Symp(\cC) = \Symp(\widetilde{\cC}) = H$. 
\end{proof}

%\begin{rem}\label{R-rem5}
%Let $\cC$ be as in the proof Theorem \ref{T-weaker}. Then $\cC$ can be viewed simultaneously as a stabilizer code over $R$ and over $\F_q$. Let $\Mon_{\text{SL},\,R}(\cC)$ and $\Mon_{\text{SL},\,\F_q}(\cC)$ denote the monomial groups of $\cC$ over $R$ and over $\F_q$ respectively. It is clear that every $\SL_2(\F_q)$-monomial map is a $\SL_2(R)$-monomial map. This yields $\Mon_{\text{SL},\,\F_q}(\cC)\subseteq \Mon_{\text{SL},\,R}(\cC)$, and thus $\rMon_{\F_q}(\cC)\subseteq \rMon_R(\cC)$.
%\end{rem}

So far we have been comparing symplectic isometries of stabilizer codes with the symplectic isometries of the entire ambient space. But for a stabilizer code $\cC$ we have $\cC\subseteq \cC^\perp$. How do symplectic isometries of $\cC$ relate to symplectic isometries of $\cC^\perp$? We end this section with an example that addresses this. See also Questions 7.5 and 7.6 in \cite{me2}.
\begin{exa}\label{Ex-Extension2}
Consider the stabilizer code $\cC :=\gamma^{-1}(C)$ where $C \subseteq (\F_2^2)^4$ is the self-orthogonal code generated by the matrix
\[
G = \left(\!\!
	\begin{array}{cc|cc|cc|cc}
    1 & 0 & 0 & 0 & 0 & 0 & 1 & 1 \\
    0 & 0 & 1 & 0 & 0 & 0 & 0 & 0 \\
    0 & 1 & 0 & 0 & 1 & 0 & 1 & 0 
	\end{array}\!\!\right).
\]
It is easy to see that $\cC^\perp$ is generated by the matrix
\[
H = \left(\!\!
	\begin{array}{cc|cc|cc|cc}
    1 & 0 & 0 & 0 & 0 & 0 & 1 & 1 \\
    0 & 0 & 1 & 0 & 0 & 0 & 0 & 0 \\
    0 & 1 & 0 & 0 & 1 & 0 & 1 & 0 \\
    0 & 1 & 0 & 0 & 0 & 0 & 1 & 0 \\
    0 & 1 & 0 & 0 & 0 & 1 & 0 & 1
	\end{array}\!\!\right).
\]
Let $g_i$ be the $i$-th row of $G$ and $f:C \longrightarrow (\F_2^2)^4$ be the symplectic isometry given by 
\begin{equation}
\begin{split}
g_1 \longmapsto \begin{array}{cc|cc|cc|cc}(1&0&0&0&0&0&1&0)\end{array}\\
g_2 \longmapsto \begin{array}{cc|cc|cc|cc}(0&0&0&1&0&0&0&0)\end{array}\\
g_3 \longmapsto \begin{array}{cc|cc|cc|cc}(0&1&0&0&1&0&0&1)\end{array}\\
\end{split}
\end{equation}
Clearly, there are exactly three self-dual codes $C_i$ such that $C\subsetneq C_i \subsetneq C^\perp$. Namely, if $h_i$ is the $i$-th row of $H$, they are $C\oplus \group{h_4}, C\oplus \group{h_5}$, and $C\oplus\group{h_4+h_5}$. We claim that $f$ cannot be extended to a symplectic isometry of $C^\perp$. To that end, assume $f$ extends to a linear map $C^\perp \longrightarrow \F_2^8$ that preserves orthogonality with respect to $\inner{\sbt}{\sbt}$, called again $f$. Since $C_i$'s are self-dual so are $f(C_i)$'s. Put $\widetilde{C}:=f(C)$. Then $\widetilde{C}^\perp$ has generating matrix 
\[
\widetilde{H} = \left(\!\!
	\begin{array}{cc|cc|cc|cc}
    1 & 0 & 0 & 0 & 0 & 0 & 1 & 0 \\
    0 & 0 & 0 & 1 & 0 & 0 & 0 & 0 \\
    0 & 1 & 0 & 0 & 1 & 0 & 0 & 1 \\
    0 & 1 & 0 & 0 & 0 & 0 & 0 & 1 \\
    1 & 0 & 0 & 0 & 0 & 1 & 0 & 0
	\end{array}\!\!\right).
\]
Similarly, there are three self-dual codes $\widetilde{C}_j$ such that $\widetilde{C}\subsetneq \widetilde{C}_j \subsetneq \widetilde{C}^\perp.$ Namely, if $\widetilde{h}_i$ is the $i$-th row of $\widetilde{H}$, they are $\widetilde{C}\oplus \group{\widetilde{h}_4}, \widetilde{C}\oplus \group{\widetilde{h}_5}$, and $\widetilde{C}\oplus\group{\widetilde{h}_4+\widetilde{h}_5}$. Thus $f(C_i) = \widetilde{C}_j$ for some $j$, and $f(C_i - C) = \widetilde{C}_j - \widetilde{C}$. \noindent By comparing the weight-distributions of $C_i - C$ and $\widetilde{C}_j - \widetilde{C}$ for all $i,j$, we must have $f(C_1) = \widetilde{C}_1$ in order to preserve the Hamming weight. By the same argument $f$ cannot be extended any further. %From Table 1 and 2 we can also compare the weight distributions of $C^\perp - C_1$ and $\widetilde{C}^\perp -\widetilde{C}_1$ and deduce that $f:C^\perp \longrightarrow \widetilde{C}^\perp$ does not preserve the Hamming weight. 

\begin{table}[h]
\centering
 \begin{tabular}{c|| c ||c ||c} 
  \diaghead{\theadfont gfdgdfgfdssd}%
{$C$}{$C_i$}& $01|00|00|10$ \circled{2} & $01|00|01|01$ \circled{3}& $00|00|01|11$ \circled{2}\\ [0.5ex] 
 \hline\hline
 $10|00|00|11$ & $11|00|00|01$ \circled{2}& $11|00|01|10$ \circled{3}& $10|00|01|00$ \circled{2}\\ 
 \hline
 $00|10|00|00$ & $01|10|00|10$ \circled{3}& $01|10|01|01$ \circled{4}& $00|10|01|11$ \circled{3}\\
 \hline
 $01|00|10|10$ & $00|00|10|00$ \circled{1}& $00|00|11|11$ \circled{2}& $01|00|11|01$ \circled{3}\\
 \hline
 $10|10|00|11$ & $11|10|00|01$ \circled{3}& $11|10|01|10$ \circled{4}& $10|10|01|00$ \circled{3}\\
 \hline
 $01|10|10|10$ & $00|10|10|00$ \circled{2}& $00|10|11|11$ \circled{3}& $01|10|11|01$ \circled{4}\\ \hline
 $11|00|10|01$ & $10|00|10|11$ \circled{3}& $10|00|11|00$ \circled{2}& $11|00|11|10$ \circled{3}\\
 \hline
 $11|10|10|01$ & $10|10|10|11$ \circled{4}& $10|10|11|00$ \circled{3}& $11|10|11|10$ \circled{4}
\end{tabular}
\vspace{-.1 in}
\caption{Weight distributions of $C_i-C$}
\label{Table:1}
\end{table}

\begin{table}[h]
\centering
 \begin{tabular}{c|| c ||c ||c} 
  \diaghead{\theadfont gfdgdfgfdssd}%
{$\widetilde{C}$}{$\widetilde{C}_j$}& $01|00|00|01$ \circled{2} & $10|00|01|00$ \circled{2}& $11|00|01|00$ \circled{2}\\ [0.5ex] 
 \hline\hline
 $10|00|00|10$ & $11|00|00|11$ \circled{2}& $00|00|01|10$ \circled{2}& $01|00|01|10$ \circled{3}\\ 
 \hline
 $00|01|00|00$ & $01|01|00|01$ \circled{3}& $10|01|01|00$ \circled{3}& $11|01|01|00$ \circled{3}\\
 \hline
 $01|00|10|01$ & $00|00|10|00$ \circled{1}& $11|00|11|01$ \circled{3}& $10|00|11|01$ \circled{3}\\
 \hline
 $10|01|00|10$ & $11|01|00|11$ \circled{3}& $00|01|01|10$ \circled{3}& $01|01|01|10$ \circled{4}\\
 \hline
 $01|01|10|01$ & $00|01|10|00$ \circled{2}& $11|01|11|01$ \circled{4}& $10|01|11|01$ \circled{4}\\ \hline
 $11|00|10|11$ & $10|00|10|10$ \circled{3}& $10|00|11|11$ \circled{3}& $00|00|11|11$ \circled{2}\\
 \hline
 $11|01|10|11$ & $10|01|10|10$ \circled{4}& $01|01|11|11$ \circled{4}& $00|01|11|11$ \circled{3}
\end{tabular}
\vspace{-.1 in}
\caption{Weight distributions of $\widetilde{C}_j-\widetilde{C}$}
\label{Table:2}
\end{table}
\end{exa}

\section{Applications to LU-LC Conjecture}\label{Sec-LULC}
The Pauli group is by definition a subgroup of the unitary group $\cU(d^n)$. For a unitary matrix $U\in \cU(d^n)$ we have $U^\dagger = U^{-1}$ where the dagger represents the conjugate transpose. Thus the normalizer of the Pauli group is given by 
\begin{equation}
\cN(\cP_n) : = \{U\in \cU(d^q) \mid U\cP_nU^\dagger = \cP_n\}.
\end{equation}
\begin{defi}
The $n$-\textbf{qudit Clifford group} is $\cC_n:=\cN(\cP_n)/\{e^{i\theta}I\mid \theta\in \R\}$.
\end{defi}
Note that the Clifford group is simply the normalizer of the Pauli group where we disregard the phases. The latter is of course justified by \textbf{phase principle} which in quantum computation has no physical consequence. Throughout this section we will pay special attention to the subgroup $\cC_1^{\otimes n}\leq\cC_n$. We call $U\in \cC_n$ a \textbf{Clifford operator} whereas $U\in \cC_1^{\otimes n}$ a \textbf{local Clifford} (LC) operator. Recall the surjective group homomorphism $\Psi$ from \eqref{e-PsiN}, with kernel $\ker \Psi = \{\omega^\ell I\mid\ell\in \Z\}$. We will denote $\cP_n^*:=\cP_n/\ker \Psi$. Thus we have an induced isomorphism 
\begin{equation}\label{e-Psi*}
\Psi^*:\cP_n^* \longrightarrow R^{2n}.
\end{equation}
Then $\Psi$ and $\Psi^*$ agree when restricted to stabilizers. The normalizer $\cN(\cP_n)$ acts on $\cP_n$ via conjugation. This induces a well-defined action of $\cC_n$ on $\cP_n^*$. Stated differently, for all $U \in \cC_n$ we obtain a group homomorphism
\begin{equation}\label{e-Psi*U}
\phi_U:\cP_n^*\longrightarrow \cP_n^*, \,P\longmapsto UPU^\dagger,
\end{equation}
which in turn is an automorphism of $\cP_n^*$. 
%%%%%%%%%%%%%%%%%%%%%%%%%%%%%%%%%%%%%%%%
\begin{rem}\label{rem7}
Similarly as above, using the action of $\cN(\cP_n)$ on $\cP_n$ we also obtain a group homomorphism
\begin{equation}
  \Phi: \cN(\cP_n)\longmapsto \Aut(\cP_n), \,\,\,\\
   U\longmapsto \left\{\!\!\begin{array}{rcl}\Phi_U:\cP_n&\longrightarrow& \cP_n\\ P&\longmapsto&UPU^\dagger\end{array}\right..
\end{equation}
Note that $U \in \ker \Phi$ iff $U$ commutes with every Pauli operator. Since the Pauli operators span\footnote{For instance, see \cite[Rem. 3.6]{me2} and the references therein.} the matrix space $\cM_{d^n}(\C)$, we may conclude that 
\begin{align*}
U \in \ker \Phi & \iff UM = MU \text{ for all }M\in \cM_{d^n}(\C)\\
&\iff U \in \{e^{i\theta}I\mid \theta \in \R\}.
\end{align*}
Hence $\cC_n = \cN(\cP_n)/\ker\Phi$ can be thought of as a subgroup of $\Aut(\cP_n)$. Namely,
\begin{equation}\label{e-Pn*}
\cC_n \cong \{\Phi_U\mid U\in \cN(\cP_n)\} \leq \Aut(\cP_n).
\end{equation}
\end{rem}

Although Remark \ref{rem7} gives a natural connection of the Clifford group with automorphisms of the Pauli group, we focus only on \eqref{e-Psi*} and \eqref{e-Psi*U}. Thanks to \eqref{e-Psi*} we clearly have $\Aut(\cP_n^*) \cong \Aut(R^{2n})$. Moreover, the map $\Psi_U:=\Psi^{*-1}\circ \phi_U\circ\Psi^*$ is an automorphism of the additive group $(R^{2n}, +)$ for any $U \in \cC_n$. Since $\Psi^*$ and $\phi_U$ are only group isomorphisms, it is impossible to say anything about $R$-linearity of $\Psi_U$. For this reason we restrict ourselves to the Frobenius ring $R:=\Z/d\Z$. With this restriction, $\Psi_U$ is $R$-linear and it is given by right matrix multiplication. Namely, for a matrix $M \in \GL_{2n}(R)$ denote $L_M:x\longmapsto xM$ its induced linear map. Then for every $U\in \cC_n$ there exists $M(U)\in \GL_{2n}(R)$ such that the following diagram
\begin{equation}\label{e-Diagram1}
\begin{array}{l}
   \begin{xy}
   (0,0)*+{R^{2n}}="a"; (20,0)*+{R^{2n}}="b";%
   (0,20)*+{\cP_n^*}="c"; (20,20)*+{\cP_n^*}="d";%
   {\ar "a";"b"}?*!/_-3mm/{L_{M(U)}};
   {\ar "c";"d"}?*!/_3mm/{\phi_{U}};%
   {\ar "c";"a"}?*!/_-3mm/{\Psi^*};
   {\ar "d";"b"}?*!/_3mm/{\Psi^*};
   \end{xy}
\end{array}
\end{equation}
commutes.
\begin{rem}\label{R-MON}
Consider \eqref{e-Diagram1} for $n=1$ and recall that we have fixed $R:=\Z/d\Z$. It is straightforward to show that for every $U\in \cC_1$ we have $M(U)\in \SL_2(R)$. The converse is also true, that is, 
\begin{equation}\label{e-dmod}
\text{for every $M\in \SL_2(R)$, there exists $U(M) \in \cC_1$ such that \eqref{e-Diagram1} commutes}.
\end{equation}
In this paper we will need only the existence, thus, for the details of the existence we refer the reader to \cite{HostensEtAl05,Appleby05}. It is worth mentioning that in these references the arithmetic is modulo $\ov{d}$ where $\ov{d}$ is as in \eqref{e-cbar}. Then one shows that the same holds true modulo $d$; see \cite[Lemma A.1]{ABBGL11}, for instance. Hence, \eqref{e-dmod} holds regardless of whether $d$ is odd or even. Now let $U = U_1\otimes\cdots\otimes U_n\in \cC_1^{\otimes n}$. Then 
\begin{equation}
M(U) = \diag(M(U_i))_i
\end{equation}
is a $2n\times 2n$ block diagonal matrix, where $M(U_i)\in \SL_2(R)$. In other words, $M(U)$ is a $\SL_2(R)$-monomial map for every $U\in \cC_1^{\otimes n}$.
\end{rem}
%%%%%%%%%%%%%%%%%%%%%%%%%%%%%%%%%%%%%%%%%%%%%%%%
\begin{rem}\label{R-USU}
Let $S\leq \cP_n$ be a stabilizer. By definition $S \cap \ker \Psi = \{I\}$ and thus $\Psi(S) = \Psi^*(S)$ gives rise to a stabilizer code $\cC\subseteq R^{2n}$. It is easy to see that for any $U\in \cC_n$ the group
\begin{equation}\label{e-UPU}
\phi_U(S) = USU^\dagger := \{UPU^\dagger\mid P\in S\}
\end{equation}
is again a stabilizer. Thus $\Psi(USU^\dagger)$ also defines a stabilizer code $\cC_U\subseteq R^{2n}$. Moreover, we obtain a quantum stabilizer code $\cQ(USU^\dagger)$. The reader will verify that $\cQ(USU^\dagger) = U\cQ(S) :=\{Uv\mid v\in \cQ(S)\}$.
\end{rem}
%%%%%%%%%%%%%%%%%%%%%%%%%%%%%%%%%%%%%%%%%%%%%%%%%
\begin{theo}
Let $U\in\cC_1^{\otimes n}$. Then $\cC$ and $\cC_U$ as in Remark \ref{R-USU}. are symplectically isometric.
\end{theo}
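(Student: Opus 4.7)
The plan is to combine the commutative diagram \eqref{e-Diagram1} with Remark \ref{R-MON} to produce an explicit symplectic isometry $\cC\longrightarrow\cC_U$. Concretely, I would identify $\cC_U$ as $L_{M(U)}(\cC)$ and then observe that $L_{M(U)}$ is a $\SL_2(R)$-monomial map, hence a symplectic isometry by Theorem \ref{T-IsoR2n}.

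For the first step, since both $S$ and $USU^\dagger$ are stabilizers (the latter by Remark \ref{R-USU}), both intersect $\ker\Psi$ trivially and thus embed into $\cP_n^*$ under the quotient $\pi\colon \cP_n\longrightarrow \cP_n^*$. Under this identification we have $\Psi^*(\pi(S))=\Psi(S)=\cC$ and, analogously, $\Psi^*(\pi(USU^\dagger))=\Psi(USU^\dagger)=\cC_U$. The action $\phi_U$ descends well-defined to $\cP_n^*$, so commutativity of \eqref{e-Diagram1} gives
$$\cC_U \;=\; \Psi^*(\phi_U(\pi(S))) \;=\; L_{M(U)}(\Psi^*(\pi(S))) \;=\; L_{M(U)}(\cC).$$

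For the second step, since $U\in\cC_1^{\otimes n}$, Remark \ref{R-MON} tells us that (after passing through $\gamma$) we have $M(U)=\diag(M(U_1),\ldots,M(U_n))$ with each $M(U_i)\in \SL_2(R)$. In other words, $L_{M(U)}$ is a $\SL_2(R)$-monomial map of $R^{2n}$, so by Theorem \ref{T-IsoR2n} it is a symplectic isometry of $R^{2n}$. Restriction to a submodule trivially preserves the symplectic weight and the symplectic inner product, so $L_{M(U)}|_\cC\colon \cC\longrightarrow \cC_U$ is the desired symplectic isometry.

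The only place that demands care is the passage between the multiplicative Pauli side and the additive side on $R^{2n}$, i.e.\ confirming that $\Psi$ and $\Psi^*$ agree on the stabilizer and that conjugation on $\cP_n$ really induces $L_{M(U)}$ via $\Psi^*$; once this bookkeeping is set straight, the result is an immediate consequence of Remark \ref{R-MON} and Theorem \ref{T-IsoR2n}.
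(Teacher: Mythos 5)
Your argument is correct, but it follows a genuinely different route from the paper's. The paper never invokes Remark \ref{R-MON} or Theorem \ref{T-IsoR2n} here: it works with $\Psi_U:=\Psi^{*-1}\phi_U\Psi^*$ directly on the Pauli side, observes that preservation of $\inners{\sbt}{\sbt}$ is trivial (all inner products inside a stabilizer code vanish by self-orthogonality, for both $\cC$ and $\cC_U$), and then checks weight preservation by hand: for $P=P_1\otimes\cdots\otimes P_n$ one has $\swt(P)=|\{i\mid P_i\neq I\}|$, and conjugation by $U=U_1\otimes\cdots\otimes U_n$ acts componentwise, so this count is unchanged. Your proof instead identifies $\cC_U=L_{M(U)}(\cC)$ via the diagram \eqref{e-Diagram1} and uses the fact that $M(U)=\diag(M(U_1),\ldots,M(U_n))$ with $M(U_i)\in\SL_2(R)$, so that $L_{M(U)}$ is an $\SL_2(R)$-monomial map and hence a symplectic isometry of all of $R^{2n}$ by Theorem \ref{T-IsoR2n}. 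What your route buys is a strictly stronger conclusion: $\cC$ and $\cC_U$ are not merely symplectically isometric but monomially equivalent, which is essentially the forward direction of Theorem \ref{T-LCP1} with trivial permutation; it also makes $R$-linearity of the isometry manifest, since $L_{M(U)}$ is matrix multiplication. The cost is that you lean on the full content of Remark \ref{R-MON} (in particular $M(U_i)\in\SL_2(R)$, and hence the standing restriction $R=\Z/d\Z$ under which $M(U)$ is even defined), whereas the paper's computation is more elementary and uses only the tensor-product structure of the conjugation action. Your closing caveat about reconciling $\Psi$ with $\Psi^*$ on stabilizers is exactly the right bookkeeping point, and it is handled by $S\cap\ker\Psi=\{I\}$ as you say.
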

\begin{proof}
Write $U = U_1\otimes \cdots \otimes U_n$ with $U_i \in \cC_1$. Consider the map $\Psi_U:=\Psi^{*-1}\phi_U\Psi^*$. By Remark \ref{R-USU} we have $\Psi_U(\cC) = \cC_U$. Thus, $\Psi_U$ trivially preserves the symplectic inner product on $\cC$. To complete the proof we need to show that $\Psi_U$ also preserves the symplectic weight. Since $\Psi$ is a weight preserving map, it is enough to show that $\phi_U$ is weight preserving for any $U = U_1\otimes\cdots\otimes U_n\in \cC_1^{\otimes n}$. Indeed, let $P=P_1\otimes \cdots \otimes P_n\in S$. From the very definition of the symplectic weight of a Pauli operator we have $\swt(P) = |\{i\mid P_i\neq I\}|$. Moreover, we have 
\begin{equation}
\phi_U(P) = UPU^\dagger = U_1P_1U_1^\dagger\otimes \cdots \otimes U_nP_nU_n^\dagger,
\end{equation}
which in turn implies $\swt(P) = \swt(\phi_U(P))$.
\end{proof}

\begin{notation} \label{n-notation2}
A permutation $\sigma \in S_n$ acts on $R^n$ by permuting the coordinates. For $P = \omega^lX(a)Z(b)$ we will denote $\sigma(P):= \omega^lX(\sigma(a))Z(\sigma(b))$ and for $X\subseteq \cP_n$ we will denote $\sigma(X):=\{\sigma(x)\mid x\in X\}$. It is easy to see that $S \leq \cP_n$ is a stabilizer iff $\sigma(S) \leq \cP_n$ is a stabilizer.
\end{notation}

\begin{defi}\label{D-LCP}
\begin{arabiclist}
\item Two quantum stabilizer codes $\cQ = \cQ(S)$ and $\cQ=\cQ(S')$ are called \textbf{permutation} \textbf{equivalent} if there exists a permutation $\sigma \in S_n$ such that $S' = \sigma(S)$. 
\item Two quantum stabilizer codes $\cQ = \cQ(S)$ and $\cQ=\cQ(S')$ are called \textbf{Clifford permutation equivalent} (CP) (resp., \textbf{locally Clifford permutation equivalent} (LCP)) if  there exists a permutation $\sigma \in S_n$ and $U\in \cC_n$ (resp., $U\in \cC_1^{\otimes n}$) such that $S' = U\sigma(S)U^{\dagger}$. 
\item Two quantum stabilizer codes $\cQ$ and $\cQ'$ are called \textbf{unitary equivalent} (resp., \textbf{locally unitary equivalent} (LU)) if there exists $U\in \cU(d^n)$ (resp., $U\in \cU(d)^{\otimes n}$) such that $\cQ'=U\cQ$.
\end{arabiclist}
\end{defi}
%%%%%%%%%%%%%%%%%%%%%%%%%%%%%%%%%%%%%
If we take $\sigma$ to be the identity permutation in Definition \ref{D-LCP}(2) then we are dealing with \textbf{locally Clifford} (LC) equivalent quantum stabilizer codes. It is obvious that two LC equivalent quantum stabilizer codes are also LU equivalent. Is the converse true? This is know in the literature as the LU-LC conjecture \cite{KW05}. The conjecture was reduced to various subclasses of stabilizer codes \cite{GVN08,NNMVDD03,Zeng06,Nest05}, to finally be proven incorrect in \cite{LULC-false}. One of these subclasses is that of stabilizer states, to which correspond self-dual stabilizer codes. The counterexample provided in \cite{LULC-false} is randomly generated. Thus the structure of such counterexamples is yet to be discovered. In \cite{SR10} the authors show that there exist infinitely many stabilizer states that disprove the LU-LC conjecture. A sufficient condition for spotting LU stabilizer states that are not LC is of interest. The following result characterizes LCP stabilizer codes (and thus LC stabilizer states) using the language of Section \ref{Sec-Symp}.
%%%%%%%%%%%%%%%%%%%%%%%%%%%%%%%%%%%%
\begin{theo}\label{T-LCP1}
Let $\cC=\Psi(S)$ and $\cC' = \Psi(S')$ be two stabilizer codes. Then $\cC$ and $\cC'$ are monomially equivalent iff the quantum stabilizer codes $\cQ(S)$ and $\cQ(S')$ are LCP equivalent. 
\end{theo}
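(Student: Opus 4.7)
The proof is a back-and-forth translation between conjugation inside $\cP_n$ and the right action on $R^{2n}$ given by diagram \eqref{e-Diagram1}. The two main engines are: Remark \ref{R-MON}, which furnishes a surjection $\cC_1\twoheadrightarrow\SL_2(R)$, $U\mapsto M(U)$, and tensorises so that $M(U_1\otimes\cdots\otimes U_n)$ is precisely a $\SL_2(R)$-monomial map; and Example \ref{Ex-Wyel} together with Notation \ref{n-notation2}, which say that the permutation action of $\sigma\in S_n$ on $\cP_n$ intertwines with the symplectic isometry $\tau_\sigma$, i.e.\ $\Psi\circ\sigma=\tau_\sigma\circ\Psi$.

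The backward direction is a direct computation. If $S'=U\sigma(S)U^\dagger$ with $U=U_1\otimes\cdots\otimes U_n\in\cC_1^{\otimes n}$, then applying $\Psi$ and using the two identifications above yields $\cC'=\Psi_U(\tau_\sigma(\cC))=f(\cC)$, whose $\gamma$-conjugate $\tilde f$ equals $\diag(M(U_1),\ldots,M(U_n))(P_\sigma\otimes I_2)$. By Theorem \ref{T-IsoR2n} this is a $\SL_2(R)$-monomial map, so $\cC$ and $\cC'$ are monomially equivalent. For the forward direction, a monomial equivalence $\tilde f=\diag(A_1,\ldots,A_n)(P_\sigma\otimes I_2)$ is lifted componentwise by Remark \ref{R-MON} to local Cliffords $U_i\in\cC_1$ with $M(U_i)=A_i$, and then $U:=U_1\otimes\cdots\otimes U_n\in\cC_1^{\otimes n}$ satisfies $\Psi(U\sigma(S)U^\dagger)=f(\cC)=\cC'=\Psi(S')$.

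The only subtlety is promoting this last equality of classical stabilizer codes to the equality of stabilizers $S'=U\sigma(S)U^\dagger$ required by Definition \ref{D-LCP}(2): since $\Psi$ is only injective on an individual stabilizer, two stabilizers sharing the same classical image may differ by $\ker\Psi$-valued phases on each generator. Because these phase discrepancies are themselves implemented by conjugation with local Pauli operators, and $\cP_1\subseteq\cC_1$, they can be absorbed by replacing $U$ with $VU$ for a suitable $V\in\cC_1^{\otimes n}$. I expect this phase bookkeeping to be the main delicacy of the argument; every other step reduces to the dictionary between the block-diagonal form of Theorem \ref{T-IsoR2n} and the tensor structure of $\cC_1^{\otimes n}$ supplied by Remark \ref{R-MON}.
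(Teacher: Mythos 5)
Your proposal follows the paper's proof essentially verbatim: the forward direction lifts each block $A_i$ to $U(M_i)\in\cC_1$ via Remark \ref{R-MON}, computes $U\sigma(P)U^\dagger$ generator by generator against the diagram \eqref{e-Diagram1}, and concludes $U\sigma(S)U^\dagger=S'$ by a cardinality count, while the backward direction is the same dictionary read in reverse. The phase bookkeeping you single out as the main delicacy is genuinely present but is not treated in the paper, whose displayed computation takes place in $\cP_n^*$ and simply asserts membership in $S'$; your suggestion of absorbing the $\ker\Psi$-valued discrepancies by an extra local Pauli conjugation (using that every character of $\cC'$ is of the form $\chi(\inners{v}{\sbt})$ for some $v\in R^{2n}$) is a sound way to close that gap.
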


\begin{proof}
We show the forward direction, with the other one being similar. Let $M = \diag(M_1,\ldots, M_n)(P_{\sigma}\otimes I_2)$ be a $\SL_2(R)$-monomial map as in \eqref{e-IsoR2n} that maps $\cC$ to $\cC'$. Let $U_i:=U(M_i)\in \cC_1$ be as in Remark \ref{R-MON} and consider $U:=U_1\otimes\cdots\otimes U_n\in \cC_1^{\otimes n}$. Recall the change of coordinates $\gamma$ from \eqref{e-gamma}. For $(a,b)\in \cC$ we have 
\begin{equation}
\gamma(a,b)=:x=(x_1,\ldots,x_n)\in \gamma(\cC)=:C\subseteq (R^2)^n,
\end{equation}
where $x_i=(a_i,b_i)\in R^2$. Put $P_i = \Psi^{*-1}(x_i)$. Then $P=P_1\otimes\cdots \otimes P_n\in S$, and every element of $S$ can be written in such way. With this notation we have 
\begin{align*}
U\sigma(P)U^\dagger & = U_1P_{\sigma(1)}U_1^\dagger\otimes \cdots \otimes U_nP_{\sigma(n)}U_n^\dagger \\
& = \phi_{U_1}(P_{\sigma(1)})\otimes\cdots\otimes \phi_{U_n}(P_{\sigma(n)})\\
& = \phi_{U_1}(\Psi^{*-1}(x_{\sigma(1)}))\otimes\cdots\otimes \phi_{U_n}(\Psi^{*-1}(x_{\sigma(n)}))\\
& = \Psi^{*-1}(x_{\sigma(1)}M_1)\otimes\cdots\otimes\Psi^{*-1}(x_{\sigma(n)}M_n)\\
&\in S',
\end{align*}
because $(x_{\sigma(1)}M_1,\ldots,x_{\sigma(n)}M_n) \in \gamma(\cC')$. Thus $U\sigma(S)U^\dagger \subseteq S'$. Since $|S'| = |\cC'|=|\cC|=|S|=|U\sigma(S)U^\dagger|$, equality follows.
\end{proof}

We end this section with two examples that relate all the equivalence notions discussed. Throughout we will use $R= \F_2$ and $X:=X(1),\,Z:=Z(1)$.

\begin{exa}\label{Ex-Ex11}
Let $\cC\subseteq \F_2^{2\cdot 3}$ be the stabilizer code given by the following generating matrix

$$
G = \left(\!\!\begin{array}{ccc|ccc}
1 & 0 & 1 & 0 & 1 & 0  \\
0 & 1 & 1 & 1 & 0 & 0 \\
0 & 0 & 0 & 1 & 1 & 1
\end{array}\!\!\right),
$$\\
and consider the $\SL_2(\F_2)$-monomial map given by $M = \diag(M_1,M_2,M_3)(P_{\sigma}\otimes I_2)$ where we take the permutation to be the cycle $\sigma = (123)$, and
$$ M_1 = \left(\!\!\begin{array}{cc}
1 & 0 \\
1 & 1 
\end{array}\!\!\right),
M_2=\left(\!\!\begin{array}{cc}
0 & 1 \\
1 & 0 
\end{array}\!\!\right),
M_3=\left(\!\!\begin{array}{cc}
1 & 1 \\
0 & 1 
\end{array}\!\!\right)
.$$ \\
Then, $\cC':=\{xM\mid x\in\cC\}$ is the stabilizer code given the following generating matrix
$$
G' = \left(\!\!\begin{array}{ccc|ccc}
1 & 0 & 1 & 1 & 1 & 1  \\
1 & 0 & 0 & 0 & 1 & 1 \\
1 & 1 & 0 & 1 & 0 & 1
\end{array}\!\!\right).
$$
Then the corresponding stabilizers are $S = \langle XZX,ZXX,ZZZ \rangle$ and $S'=\langle YZY, XZZ, YXZ \rangle$. To $M_i$ correspond the following Clifford operators that make \eqref{e-Diagram1} commute:
 \[
 U_1=\frac{1}{\sqrt{2}}\left(\!\!\begin{array}{cc}1&i\\i&1\end{array}\!\!\right) 
 ,U_2 =\frac{1}{\sqrt{2}} \left(\!\!\begin{array}{cc}
1 & 1  \\
1 & -1 \end{array} \!\!\right), U_3 = \left(\!\!\begin{array}{cc} 1 & 0 \\ 0 & i \end{array}\!\!\right).
\]
One easily verifies $S'=U\sigma(S)U^\dagger$ where $U=U_1\otimes U_2\otimes U_3$. The corresponding quantum stabilizer states $\cQ(S)$ and $\cQ(S')$ are the one-dimensional complex spaces generated by vectors $v=(1,0,0,-1,0,1,1,0)\T$ and $v'=(1,1,-i,i,1,-1,-i,-i)\T$ respectively. By Theorem \ref{T-LCP1} and Remark \ref{R-USU} we have 
\begin{equation}\label{e-Exx}
\cQ(S') = \cQ(U\sigma(S)U^\dagger) = U\cQ(\sigma(S)).
\end{equation}
Note that $\sigma(S) = \langle ZXX,XXZ,ZZZ\rangle$ and $\cQ(\sigma(S))$ is generated by $v''=(1,0,0,1,0,-1,1,0)\T$. One could also verify \eqref{e-Exx} directly by noting that $Uv$ and $v''$ differ only by the scalar $(1+i)/2$.
\end{exa}
%%%%%%%%%%%%%%%%%%%%%%%%%%%%%%%%%%%%%%%%
\begin{exa}\label{Ex-LCP}
Let $\cC = \im G$ and $\cC' = \im G'$ be the self-dual stabilizer codes where $G$ and $G'$ are as follows  
\[
G=\left(\!\!\begin{array}{cccc|cccc}1&0&1&1&0&1&0&0\\0&1&0&1&1&0&0&0\\0&0&0&0&1&0&1&0\\0&0&0&0&1&1&0&1\end{array}\!\!\right),\quad
G' = \left(\!\!\begin{array}{cccc|cccc}1&1&1&1&1&0&0&1\\0&0&1&1&0&1&0&0\\0&0&0&0&0&0&1&1\\0&0&1&1&1&0&0&0\end{array}\!\!\right).
\]
The map $f:\cC\longrightarrow \cC'$ that maps the $i$-th row of $G$ to the $i$-th row of $G'$ is a symplectic isometry and thus $\cC$ and $\cC'$ are symplectially equivalent. On the other hand, it is easy to see that there cannot exist a $\SL_2(\F_2)$-monomial map between the two. The associated stabilizers are
\begin{equation}\begin{split}S = \langle XZXX, ZXIX, ZIZI, ZZIZ \rangle,\\ S' =\langle YXXY, IZXX, IIZZ, ZIXX \rangle.\end{split}\end{equation}
Then, the respective quantum stabilizer states are
\begin{equation}\label{e-NotLU}
\begin{split}
\cQ(S) = \spann_{\C}\{(1,0,0,0,0,0,0,-1,0,0,1,0,0,1,0,0)\T\},\\
\cQ(S') = \spann_{\C}\{(1,0,0,1,0,0,0,0,0,0,0,0,1,0,0,-1)\T\}.
\end{split}
\end{equation}
Since $f$ is not a $\SL_2(\F_2)$-monomial map Theorem \ref{T-LCP1} implies that $\cQ(S)$ and $\cQ(S')$ are not LCP equivalent. In fact, they are not even LU equivalent. To show this we make use of the \textbf{vectorization} of matrix, that is, $\text{vec}(X)$ of a matrix $X$ is the column vector where we stack the columns of $X$. Let $X, X'\in \cM_4(\F_2)$ be the matrices whose vectorization gives the vectors in \eqref{e-NotLU}. Namely
\begin{equation}
X = \left(\!\!\begin{array}{cccc} 1&0&0&0\\0&0&0&1\\0&0&1&0\\0&-1&0&0\end{array}\!\!\right) \text{ and }
X' = \left(\!\!\begin{array}{cccc}1&0&0&1\\0&0&0&0\\0&0&0&0\\1&0&0&-1\end{array}\!\!\right).
\end{equation}
Assume that there exists $U = U_1\otimes U_2\otimes U_3\otimes U_4\in \cU(2)^{\otimes 4}$ such that $\cQ(S') = U\cQ(S)$. From elementary properties of the Kronecker Product, this is equivalent with
\begin{equation}
(U_3\otimes U_4)X(U_1\T\otimes U_2\T) = X'.
\end{equation}%\MyNote{The same rank is necessary. \\Problem: What unitary produce isometries?}Clearly this is impossible since the right-hand-side has rank 2 whereas the left hand side has rank 4.
Clearly this is impossible since the right-hand-side has rank 2 whereas the left hand side has rank 4.
\end{exa}

\section{Conclusions and Future Research}\label{S-6}
We have presented a detailed study of symplectic isometries of stabilizer codes. For stabilizer codes over fields we establish how far from being a $\SL_2(\F_q)$-monomial map a symplectic isometry is. This is achieved via Theorem \ref{T-main}. However, as discussed in Section \ref{Sec-Symp}, the stabilizer codes constructed with predetermined isometry groups are asymptotically bad. Indeed, the rate goes to zero as the characteristic of the alphabet goes to infinity. 
\begin{problem}
Construct asymptotically good stabilizer codes that satisfy Theorem \ref{T-main}.
\end{problem}
For the general case over local Frobenius rings a partial result is presented. In this case, the group $\Symp(\cC)$ is easily understood and related with the case of stabilizer codes over fields. Whereas, since $\SL_2(R) \neq \SL_2(\F_q)$, the techniques presented in this paper do not help toward understanding $\Mon_{\text{SL}}(\cC)$.
\begin{problem}
Establish an analogous result as in Theorem \ref{T-main} for stabilizer codes over Frobenius rings.
\end{problem}
In Section \ref{Sec-LULC} we relate equivalence notions of quantum stabilizer codes with symplectic isometries. In particular, Theorem \ref{T-LCP1} characterizes LCP equivalence in terms of $\SL_2(R)$-monomial maps. We view this as the first step toward systematically constructing LU equivalent stabilizer states that are not LC. Of course, much more work is needed to understand the structure of counterexamples of LU-LC conjecture. The strategy for searching for such counterexamples was already pointed out in Examples \ref{Ex-LCP}. We make this precise here as the main future direction. Let $\cC=\im G,\,\cC'=\im G'\subseteq \F_q^{2n}$ be two stabilizer codes of the same dimension. Define two isometry groups 
\begin{align*}
   \rMon(\cC,\,\cC')&:=\{B\in \GL_k(\F_q) \mid GM_{|\cC} = BG', M \text{ is an $\SL_2(\F_q)$-monomial map}\},\\
   \Symp(\cC,\,\cC')&:=\{B\in \GL_k(\F_q)\mid \swt(xG)=\swt(xBG') \text{ for all } x\in\F_q^k\}.
   \end{align*}
Example \ref{Ex-LCP} shows that $\rMon(\cC,\,\cC') \subsetneq \Symp(\cC,\,\cC')$ in general. Let $f\in \Symp(\cC,\,\cC') -\rMon(\cC,\,\cC')$. Since $f \notin \rMon(\cC,\,\cC')$, Theorem \ref{T-LCP1} guarantees that $\cQ(\Psi^{-1}(\cC))$ and $\cQ(\Psi^{-1}(\cC'))$ cannot be LCP stabilizer codes. So if they are LU equivalent to start with, we have a counterexample. Unfortunately it is not clear how LU equivalence fits into the language of Section \ref{Sec-Symp}. Thus more work is needed for understanding what symplectic isometries produce LU equivalent quantum stabilizer codes. As far as LU-LC conjecture is concerned we may restrict ourselves on quantum stabilizer states, to which correspond self-dual stabilizer codes.
\begin{problem}
Let $\cC,\,\cC'\subseteq \F_q^{2n}$ be two self-dual stabilizer codes. Establish how different $\rMon(\cC,\,\cC')$ and $\Symp(\cC,\,\cC')$ can be. That is, let $H,\,K\leq \GL_n(\F_q)$ be two groups that satisfy some reasonable necessary conditions. Is it possible to construct two self-dual stabilizer codes $\cC$ and $\cC'$ such that $H=\rMon(\cC,\,\cC')$ and $K=\Symp(\cC,\,\cC')$? 
\end{problem}
\begin{problem}
Let $\cC,\,\cC'\subseteq \F_q^{2n}$ be two self-dual stabilizer codes, and $f:\cC\longrightarrow \cC'$ be a symplectic isometry. Find sufficient conditions for the existence of $U\in \cU(q)^{\otimes n}$ with $\cQ(\Psi^{-1}(\cC'))=U\cQ(\Psi^{-1}(\cC))$.
\end{problem}
Note that a rather weak necessary condition for symplectic isometries that produce LU states was mentioned in Example \ref{Ex-LCP}. Namely, if $v$ and $v'$ are generators of two quantum stabilizer states, then the $n\times n$ matrices $X,\, X'$ with $v = \text{vectorization}(X)$ and $v' = \text{vectorization}(X')$ must have the same rank.
%%%%%%%%%%%%%%%%%%%%%%%%%%%%%%%%%%%%%%%%%%%%%%%%%%
\section*{Acknowledgements}
I am grateful to Jay Wood for his inspiring suggestions on isometry groups. I would also like to thank Heide Gluesing-Luerssen for her continuous support and recommendations that helped with preparation and presentation of this work.

\bibliographystyle{abbrv}
\bibliography{SISCbib}

\end{document}